\documentclass[11pt]{amsart}
 \usepackage[text={6.5in,9in},centering]{geometry}
 \geometry{letterpaper}                   
\usepackage{graphicx}
\usepackage{times}
\usepackage{amssymb}
\usepackage{amsmath}
\usepackage{multicol}
\usepackage{epstopdf}
\usepackage{latexsym}
\usepackage{algorithm}
\usepackage{algorithmic}
\usepackage{fullpage}
\usepackage[foot]{amsaddr}
\usepackage{tikz}

\usepackage{verbatim}
\usetikzlibrary{arrows,shapes}

\newcounter{foo}
\newtheorem{theorem}[foo]{Theorem}
\newtheorem{lemma}{Lemma}[section]

\newtheorem{corollary}[lemma]{Corollary}
\newtheorem{observation}[lemma]{Observation}

\newtheorem{definition}{Definition}[section]
\newtheorem{proposition}[foo]{Proposition}

\addtolength{\textheight}{0.05cm}

\usepackage{url}

\def\calD{{\mathcal{D}}}

\def\calM{{\mathcal{M}}}

\def\calP{{\mathcal{P}}}

\def\calX{{\mathcal{X}}}

\newcommand{\reals}{\mathbb{R}}

\newcommand{\prob}[1]{\textsc{#1}}
\newcommand{\Capacity}{\prob{Capacity}}
\newcommand{\capacity}{\Capacity}
\newcommand{\Scheduling}{\prob{Scheduling}}
\newcommand{\scheduling}{\Scheduling}

\newcommand{\geomodel}{\textsc{geo-SINR}}

\newcommand{\define}[1]{\emph{#1}} 
\newcommand{\degree}{^\circ}

\newcommand{\mypara}[1]{\smallskip\noindent\textbf{#1.}}  
\newcommand{\tightpara}[1]{\noindent\textbf{#1.}}  
\newcommand{\inddim}{D}  

\newcommand{\authorcomment}[1]{}

\DeclareGraphicsRule{.tif}{png}{.png}{`convert #1 `dirname #1`/`basename #1 .tif`.png}

\urldef{\ourmail}\url{{mbodlaender,magnusmh}@gmail.com}

\title{Beyond Geometry : Towards Fully Realistic Wireless Models
\thanks{ICE-TCS, School of Computer Science, Reykjavik University, 101 Reykjavik, Iceland. \url{
\{mbodlaender,magnusmh\}@gmail.com}. tel: +354-825-6384.
Supported by Icelandic Research Fund grant-of-excellence no.~120032011.}}
\author{Marijke H.L. Bodlaender \and Magn\'us M. Halld\'orsson}

\begin{document}

\begin{titlepage}

\date{\today}
\thispagestyle{empty}

\begin{abstract}
Signal-strength models of wireless communications capture the gradual
fading of signals and the additivity of interference. As such, they
are closer to reality than other models. However, nearly all theoretic
work in the SINR model depends on the assumption of smooth geometric
decay, one that is true in free space but is far off in actual
environments. The challenge is to model realistic environments,
including walls, obstacles, reflections and anisotropic antennas,
without making the models algorithmically impractical or analytically
intractable.

We present a simple solution that allows the modeling of arbitrary
static situations by moving from geometry to arbitrary \emph{decay
  spaces}.  The complexity of a setting is captured by a
\emph{metricity} parameter $\zeta$ that indicates how far the decay
space is from satisfying the triangular inequality.  All results that
hold in the SINR model in general metrics carry over to decay spaces,
with the resulting time complexity and approximation depending on
$\zeta$ in the same way that the original results depends on the path
loss term $\alpha$.  For distributed algorithms, that to date have
appeared to necessarily depend on the planarity, we indicate how they
can be adapted to arbitrary decay spaces at a cost in time complexity
that depends on a \emph{fading} parameter of the decay space. In
particular, for decay spaces that are \emph{doubling}, the parameter
is constant-bounded.

Finally, we explore the dependence on $\zeta$ in the approximability
of core problems. In particular, we observe that the capacity maximization problem 
has exponential upper and lower bounds in terms of $\zeta$ in general
decay spaces. In Euclidean metrics and related growth-bounded decay
spaces, the performance depends on the exact metricity definition, 
with a polynomial upper bound in terms of $\zeta$, but an exponential
lower bound in terms of a variant parameter $\phi$.
On the plane, the upper bound result actually yields the first approximation of 
a capacity-type SINR problem that is subexponential in $\alpha$.
\end{abstract}

\maketitle

\vspace{4em}


\end{titlepage}

\section{Introduction}
\label{sec:intro}

Signal-strength models of wireless communications capture the gradual
fading of signals and the additivity of interference. As such, they
are closer to reality than other models. In spite of the apparent
great complexity of such models, various fundamental problems have
been resolved analytically in recent years. These also seem essential for
studying certain properties of wireless networks, such as capacity
\cite{kumar00}, or connectivity and aggregation, which can be achieved
in logarithmic rounds in worst case \cite{MoWa06,SODA12}.

Nearly all theoretic work in signal-strength models have been done in
the ``SINR model'' that assumes that signals decay as a smooth
polynomial function of distance. We shall refer to this as the
{\geomodel} model. This assumption about decay (or \emph{path loss})
is true in free space, but turns out to be far off in actual
environments, as shown by a long history of experimental studies
(e.g., \cite{kotz2004experimental}). Quoting a recent meta-study,
\cite{baccour2012radio}, ``link quality is not correlated with distance.'' 
Experimental studies have long ago jettisoned the geometric path loss
assumption. 
This questions the wisdom of studying ``SINR models'' analytically, given the added
effort and complexity.

One hope might be that results in the ``basic SINR model'' could
eventually carry some insights that would be of use in more detailed
models that capture more of reality. Yet, there are no proposed
intermediate models, and real environments consist of 
assortments of walls, ceilings and obstacles,
as well as complex interactions involving reflections, shadowing, 
multi-path signals, and anisotropic (or even directional) antennas.
It might seem near impossible to capture this all without 
making the resulting models hopelessly impractical for algorithm
design and/or analytically intractable. 

\tightpara{Our contributions}
We present a simple solution that allows the modeling of arbitrary static
situations by moving from geometry to arbitrary \emph{decay spaces}.
The decay between two ordered nodes is the reduction in the strength
of a signal sent from the first node to the second.
By signal-strength measurements, that almost any cheap node can
perform today, these decays capture the \emph{truth on the ground}.
The complexity of a setting is captured by a
\emph{metricity} parameter $\zeta$ that indicates how far the decay
space is from satisfying the triangular inequality.

\emph{All} results that hold in the SINR model in general metrics carry over
to decay spaces, with the resulting time complexity and approximation
depending on $\zeta$ in the same way that the original results depends
on the path loss term $\alpha$.

For distributed algorithms, that to date have appeared to necessarily depend on
the planarity, we introduce a \emph{fading} parameter of the decay space
and indicate they can be adapted to arbitrary decay spaces
at a cost in time complexity that depends on a \emph{fading} parameter
of the decay space. In particular, for decay spaces that are
\emph{doubling}, the parameter is constant-bounded.

Finally, we explore the dependence on $\zeta$ in the approximability
of core problems. In particular, we observe that the {\capacity} problem 
has exponential upper and lower bounds in terms of $\zeta$ in general
decay spaces. In Euclidean metrics and related growth-bounded decay
spaces, the performance depends on the exact metricity definition, 
with a polynomial upper bound in terms of $\zeta$, but an exponential
lower bound in terms of a variant parameter $\phi$.

One may ask if we are being led to yet another model that will later been shown unrealistic. Fortunately, numerous experimental studies have verified the remaining key assumptions in wide range of situations and technology \cite{son2006,MaheshwariJD2008,chen2010,sevani2012sir,us:ICDCS14}: additivity of interference, SINR capture effectiveness (the near-thresholding relationship between SINR level and packet reception rate), and invariability of wireless conditions in static environments. Thus, we may finally be reaching a wireless model that is a close approximation of reality, yet usable algorithmically and analytically.
That said, one should not discount the value of abstractions or the potentially value of simple models. Also, modeling dynamic and mobile situations, which is outside the scope of our work, remains a highly important (and largely open) issue.

\tightpara{Related work}
%
The ``abstract SINR'' model captures, like decay spaces, arbitrary pairwise path loss. Some positive results hold in that model, e.g., 
distributed power assignment of feasible sets \cite{LotkerPPP11},
reductions involving Rayleigh fading \cite{Dams2012}, and 
special cases of capacity maximization \cite{us:algosensors11}.
However, for most problems of interest, extremely strong inapproximability results hold \cite{GHWW09,khot2006better}. 
Thus, it is essential to use near-metric properties of the decay space.

The introduction of general metrics (apparently first in
\cite{FKV09,FKRV09}) was a significant step in extending SINR theory
beyond geometric assumptions.
Fading metrics \cite{us:talg12} were identified to capture the main
property required from the planar setting.
The concept of \emph{inductive independence} \cite{KV10,hoeferspaa}
has heralded a more systematic approach to SINR analysis, and can by
itself be seen as parameter of the decay space. 
Same holds for \emph{$C$-independence} \cite{infocom11,dams2013sleeping}
in the case of uniform power.

In a sibling paper \cite{us:ICDCS14}, we introduced decay spaces and
metricity with a focus on experimental validation.
The experimental results align with previous results (e.g., 
\cite{son2006,MaheshwariJD2008,chen2010,sevani2012sir}) that
whereas geometric decay is far off, other factors of the ``SINR
model'' closely approximate reality.
In the current paper, for comparison, we substantiate our claims 
of theory transfer, 
treat the fading necessary for distributed algorithms, give lower
bound results in terms of metricity parameters, and show that capacity
approximation in the plane depends only polynomially on the path loss 
term $\alpha$.

\mypara{Outline of the rest of the paper}
In the next section, we introduce decay spaces (formal definitions, the metricity parameter and how these spaces can be populated),
and indicate how previous results in metric spaces carry over.
In Sec.~\ref{sec:fading}, we address the core requirement of \emph{fading} for distributed algorithms, introduce a parameter that extends their reach to arbitrary spaces, and prove constant upper bounds in spaces with bounded doubling dimension.
The impact of metricity parameters on approximability is treated in Sec.~\ref{sec:dependence}.

\section{Decay Spaces}
\label{sec:models}

\subsection{Signal-strength models}
\label{sec:ss}

The \emph{abstract SINR} model has two key properties:
\textbf{(i)} signal decays as it travels from a sender to a receiver,
and \textbf{(ii)} interference -- signals from other than the intended transmitter -- accumulates.
Transmission succeeds if and only if the interference is below a given threshold.

Formally, a \emph{link} $l_v = (s_v, r_v)$ is given by a pair of
nodes, sender $s_v$ and a receiver $r_v$.  The \emph{channel gain} $G_{uv}$ denotes the multiplicative
decay in the signal of $l_u$ as received at $r_v$.  The
\emph{interference} $I_{uv}$ of sender $s_{u}$ (of link $l_u$) on 
the receiver $r_v$ (of link $l_v$) is $P_u G_{uv}$,
where $P_v$ is the power used by $s_v$.
When $u=v$, we refer to $I_{vv}$ as the \emph{signal strength} of link $l_v$.
If a set $S$ of
links transmits simultaneously, then the \emph{signal to noise and
  interference ratio} (SINR) at $l_v$ is
\begin{equation}
 \text{SINR}_v := \frac{I_{vv}}{N + \sum_{u \in S} I_{uv}} = 
   \frac{P_v G_{vv}}{N + \sum_{u \in S} P_v G_{uv}}\ ,
\label{eqn:sinr}
\end{equation}
where $N$ is the ambient noise.

We refer to the standard signal-strength model as the {\geomodel} model, which 
adds to the SINR formula the assumption of \emph{geometric path loss}:
that signal decays proportional to a fixed polynomial of the distance, 
\emph{i.e.}, 
$G_{uv} = d(s_u, r_v)^{-\alpha}$,
 where the \emph{path loss term} $\alpha$ is assumed to be an
 arbitrary but fixed constant between 1 and 6.  
This assumption is valid in free space, with $\alpha=2$ in perfect vacuum.  

The last assumption made in theoretical models is \emph{thresholding}:
the transmission of $l_v$ is \emph{successful} iff $\text{SINR}_v
\ge \beta$, where $\beta \ge 1$ is a hardware-dependent constant.  We shall
also make this assumption. It's been shown by Dams, Kesselheim and
Hoefer \cite{Dams2012} that certain models that include a randomized
filter in this decision can be efficiently simulated by thresholding
algorithms.

\subsection{Metrics and Decay Spaces}
\label{sec:decayspaces}

We seek to model arbitrary path loss that is independent of distance.
We capture this by a \emph{decay} function $f$ of pairs of points (or
nodes) so that $G_{uv} = 1/f(s_u, r_v)$.

We shall formulate signal decay as \emph{decay spaces}.  Decays
between distinct points are always positive.  Exactly what happens at
a given point (i.e., the value of $f(p,p)$) is immaterial to our
consideration, since we may assume that all nodes are distinct.  


\begin{definition}
  A \emph{decay space} is a pair $\calD = (V,f)$, where $V$ is a
  discrete set of nodes (or points) and $f$ is a mapping (or matrix) $f:V\times V
  \rightarrow \reals_{\ge 0}$ that associates values (\emph{decays}) with
  ordered pairs of nodes.  The decays satisfy: i) $f(p,q) \ge 0$
  (non-negativity), and ii) $f(p,q)=0$ if and only if $p=q$ (the
  identity of indiscernibles).
\end{definition}

Decay spaces need not be symmetric nor obey the triangular inequality.
Such spaces are known as \emph{pre-metrics}.
As shorthand, we write $f_{pq} = f(p,q)$.


Decay space can either represent the truth-on-the-ground, or
its representation/approximation as data. They are relatively easily obtained by measurements, which even the cheapest gadgets today provide. 
They can also be inferred by packet reception rates, or predicted by heuristic  or environmental models \cite{Goldsmith}.

\subsubsection*{Metricity}
We introduced in \cite{us:ICDCS14} a parameter that represents how close the decay matrix is to a distance metric. 
\begin{definition}
The \emph{metricity} $\zeta(\calD)$ of a decay space $\calD = (V,f)$ is
the smallest number such that, for every triplet $x,y,z \in V$,
  \begin{equation}
     f(x,y)^{1/\zeta} \le f(x,z)^{1/\zeta} + f(z,y)^{1/\zeta}\ .
  \label{eq:zeta}
  \end{equation}
\end{definition}
Note that $\zeta$ is well-defined since $\zeta_0 = \lg (\max_{x,y}
f(x,y))/(\min_{x,y} f(x,y))$ satisfies (\ref{eq:zeta}).
In the case of geometric path loss, $\zeta = \alpha$, since
$f(x,y) = d(x,y)^\alpha$. 

We define \emph{quasi-distances} between nodes in a decay space 
by $d(p,q) = f_{pq}^{1/\zeta}$. Let $d_{pq}=d(p,q)$ for short.
These quasi-distances induce a \emph{quasi-metric} 
$\calD' = (V,d)$, i.e., a metric except for the possible lack of symmetry.
In the Euclidean setting, quasi-distances are simply the Euclidean distances.


\subsection{Theory transfer}
\label{sec:theorytransfer}

The lion share of the theoretic literature on signal-strength models
can be converted to decay spaces with limited effort.
We aim here to clarify and substantiate that observation.
Our objective is for the non-specialist to be able to determine with
limited effort which results do hold in the decay model and which don't and additionally, when the question arises, which properties of metric
and/or decay spaces are necessary for correct functioning.

In this section, we focus on what is needed for results to hold in
arbitrary decay spaces. In the following section, we deal with results that require special space properties, particularly in the context of distributed algorithms.
By a \emph{result}, we mean a combination of an algorithm or a protocol and its
analysis.

The complexity of a result can be a function of the metric/space.
Here, complexity refers to measures like time and message count, 
but also performance
measures like approximability. In particular, these measures have
nearly always been functions of the metric parameters, such as the
path loss term $\alpha$, but this dependence is often hidden in big-oh notation.

We make the following sweeping assertion (stated without substantiation
in the sibling paper \cite{us:ICDCS14}):

\begin{proposition}
  If a {\geomodel} result only requires metric properties (symmetry, triangular
  inequality), then it holds equally well in arbitrary decay spaces.  Symmetry
  is required of the decay space only if it was required in the
  original setting.  The relevant complexity measure (time,
  approximation) grows with $\zeta$ in the same manner as for the
  original result in terms of $\alpha$.
\label{prop:metric}
\end{proposition}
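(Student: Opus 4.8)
The plan is to exhibit an explicit, axiom-preserving dictionary between a decay space and a metric instance, and then to argue that any {\geomodel} proof consuming only metric axioms survives the translation verbatim. The dictionary is already handed to us by the definition of metricity: set $d(p,q) = f_{pq}^{1/\zeta}$, the quasi-distance of the excerpt. By the definition of $\zeta$, inequality (\ref{eq:zeta}) reads precisely $d(x,y) \le d(x,z) + d(z,y)$, so $d$ satisfies the triangle inequality; non-negativity and the identity of indiscernibles transfer immediately from the decay-space axioms. Thus $(V,d)$ is a quasi-metric, and it is a genuine metric exactly when $f$ is symmetric. This first step is essentially free.

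The second step is to observe that the signal model is form-invariant under the dictionary. Since $G_{uv} = 1/f(s_u,r_v) = d(s_u,r_v)^{-\zeta}$, the channel gain obeys the same power law as the {\geomodel} path-loss law $G_{uv} = d(s_u,r_v)^{-\alpha}$, with $\zeta$ in the role of $\alpha$. Consequently the SINR expression (\ref{eqn:sinr}), the thresholding test $\text{SINR}_v \ge \beta$, and every derived quantity (affectance, feasibility, signal-to-noise margins) are literally the same symbolic objects in the two settings once $\alpha$ is renamed $\zeta$.

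The third step is the transfer argument itself. A {\geomodel} \emph{result} is an algorithm together with its analysis. By the hypothesis, that analysis appeals only to (a) the metric axioms of the underlying distance---non-negativity, identity, triangle inequality, and, where flagged, symmetry---and (b) the power-law form of $G$. Steps one and two show that $d$ satisfies the same axioms (symmetry precisely when $f$ is symmetric) and that $G$ has the same form with exponent $\zeta$. Hence each line of the original argument remains valid after substituting $d$ for the distance and $\zeta$ for $\alpha$; in particular, any bound the original analysis expressed as a function of $\alpha$ is reproduced, by the identical derivation, as the same function of $\zeta$. This yields all three assertions of the proposition at once.

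The hard part is not any single calculation but the honesty of the hypothesis ``only requires metric properties.'' The real content is an audit: for each building block one wishes to import, one must verify that its analysis never secretly uses Euclidean-specific structure beyond the metric axioms---for instance packing or volume bounds, bounded doubling dimension, or planarity, none of which hold in arbitrary metrics. These are exactly the ingredients that fail to transfer and that motivate the separate treatment in Sec.~\ref{sec:fading}. A second, subtler point is asymmetry: some metric analyses tacitly interchange $d(s_u,r_v)$ and $d(s_v,r_u)$, and each such use must be caught and charged to a symmetry requirement on $f$, which is why the proposition carries the symmetry caveat rather than dropping it.
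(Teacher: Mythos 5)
Your proposal is correct and follows essentially the same route as the paper: both pass to the induced quasi-distance $d(p,q)=f_{pq}^{1/\zeta}$, note that it is a (quasi-)metric which is a genuine metric exactly when $f$ is symmetric, and then apply the original result with path-loss exponent $\zeta$ in place of $\alpha$. You merely make explicit what the paper's two-sentence proof leaves implicit, namely the form-invariance of the channel gain $G_{uv}=d(s_u,r_v)^{-\zeta}$ and hence of the SINR formula under this substitution.
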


\begin{proof}
The quasi-distances $d$ of a decay space $\calD=(V,f)$
form a quasi-metric $\calD'=(V,d)$, which becomes a metric 
iff $\calD$ satisfies symmetry. 
Applying the original result to the metric $\calD'$ with path
loss constant $\zeta(\calD)$ gives an equivalent solution to the problem
on the decay space $\calD$.
\end{proof}

Specifically, the following results on the following problems carry over without change: 
capacity maximization \cite{SODA11,KesselheimESA12}, 
scheduling \cite{FKRV09,FKV09},
weighted capacity \cite{us:talg12,us:Infocom12}, 
spectrum auctions \cite{hoeferspaa,HoeferK12},
relationship between power control regimes \cite{tonoyan2011a,us:SODA13},
dynamic packet scheduling \cite{CISS12,sirocco12,kesselheimStability,us:SODA13}, 
distributed scheduling \cite{KV10,icalp11}, and
distributed capacity maximization with regret-minimization \cite{infocom11}
(extended for jamming \cite{dams2013jamming}, 
online requests with stochastic assumptions \cite{GHKSV13},
and changing spectrum availability \cite{dams2013sleeping}).

We can also make an immediate observation regarding methods that
hold for restricted metrics.
\begin{observation}
If a result holds in {\geomodel} for a given class $\calM$ of metrics,
then it holds equally in those decay spaces whose induced quasi-metric
is contained in $\calM$.
\label{asst:metric}
\end{observation}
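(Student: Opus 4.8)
The plan is to specialize the argument behind Proposition~\ref{prop:metric} to a restricted class of metrics, observing that metric structure enters the transfer only through the induced quasi-metric $\calD'$, so constraining $\calD'$ to lie in $\calM$ is exactly what is needed.

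First I would recall the core equivalence from the proof of Proposition~\ref{prop:metric}: a decay space $\calD = (V,f)$ induces quasi-distances $d(p,q) = f_{pq}^{1/\zeta(\calD)}$, and the SINR instance on $\calD$ coincides \emph{exactly} with a {\geomodel} instance on the induced quasi-metric $\calD' = (V,d)$ using path loss term $\alpha = \zeta(\calD)$. Indeed, $G_{uv} = 1/f(s_u,r_v) = d(s_u,r_v)^{-\zeta(\calD)}$, which is precisely the {\geomodel} gain formula with $\alpha = \zeta(\calD)$. The two instances therefore share identical channel gains, identical SINR values, and hence identical families of feasible transmissions, so any solution computed for one is verbatim a solution for the other.

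The next step is to invoke the hypothesis. By assumption the induced quasi-metric $\calD'$ belongs to the class $\calM$; when $\calM$ is a class of (symmetric) metrics this membership already forces $\calD'$ to be symmetric, hence a genuine metric. Since the {\geomodel} result is assumed to hold for every metric in $\calM$ and $\calD' \in \calM$, applying it to $\calD'$ with path loss term $\zeta(\calD)$ yields a valid {\geomodel} solution on $\calD'$. Transporting this back along the instance equivalence of the previous step gives the corresponding solution on $\calD$, with time and approximation governed by $\zeta(\calD)$ in exactly the way the original result depends on $\alpha$.

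I do not expect a genuine obstacle here; the observation is an immediate corollary of the construction underlying Proposition~\ref{prop:metric}. The only point requiring care is bookkeeping around symmetry and the meaning of ``contained in $\calM$'': requiring $\calD' \in \calM$ automatically excludes precisely those decay spaces whose induced quasi-metric violates any defining property of $\calM$ (symmetry included), so the hypothesis is self-consistent and no separate handling of asymmetric cases is needed.
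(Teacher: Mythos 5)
Your proposal is correct and matches the paper's approach: the paper offers no separate proof for this observation, treating it (as you do) as an immediate corollary of the construction in Proposition~\ref{prop:metric}, namely that the decay-space instance is verbatim a {\geomodel} instance on the induced quasi-metric $\calD'$ with path loss term $\zeta(\calD)$, to which the original result applies once $\calD' \in \calM$. Your bookkeeping on symmetry being subsumed by the membership hypothesis $\calD' \in \calM$ is consistent with the paper's intent.
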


\mypara{Results that do not carry over to decay spaces}
%
There remains a large amount of work in {\geomodel} that depends on
\emph{positions} (or distributions thereof). 
Such results are necessarily tied to geometry, although with some work
it may be possible to extend them to other decay spaces.

A common use of positional information is by partitioning the
plane, so as to make simultaneous communication non-conflicting. This
is particularly an issue for deterministic distributed algorithms.
Examples of this include deterministic distributed broadcast
\cite{JurdzinskiKRS13, JurdzinskiKS13FCT} and local broadcast
\cite{JurdzinskiK12, FuchsW13}.  Also, some centralized approximation
algorithms and heuristics for {\capacity} and {\scheduling} of
\cite{gouss2007, DBLP:journals/corr/abs-1208-0627}.  
Occasionally, angles are used, e.g.\cite{GHWW09}, which does not carry over
(but see Sec.~\ref{sec:polya}).

There is also a large literature on average case analysis, typically
assuming a uniform distribution of points in the plane, starting with
an influential paper of Gupta and Kumar \cite{kumar00} that first
introduced {\geomodel}.

Finally, SINR diagrams \cite{AvinEKLPR12} (and follow-up work of
subsets of the authors) uses intrinsically topological properties of Euclidean
metrics.

\subsection{Additional definitions: Power, affectance, separability}

We will work with a total order $\prec$ on the links, where $l_v \prec l_w$ implies that $f_{vv} \le f_{ww}$. 
%
A power assignment $\calP$ is \emph{monotone} if both $P_v \le P_w$
and $\frac{P_w}{f_{ww}} \le \frac{P_v}{f_{vv}}$ hold whenever $l_v \prec l_w$.
\footnote{This corresponds to \emph{length monotone} and
  \emph{sub-linear} power assignments in {\geomodel}.} This captures
the main power strategies, including uniform and linear power.

We modify the notion of \emph{affectance} \cite{GHWW09,HW09,KV10}:
The affectance $a^{\calP}_w(v)$ of link $l_w$ on link $l_v$ under power assignment $\calP$  is the interference of $l_w$ on $l_v$ normalized to the signal strength (power received) of $l_v$, or
\[
a_w(v) = \min \left(1, c_v \frac{P_w G_{wv}}{P_v G_{vv}}\right) = \min \left(1, c_v \frac{P_w}{P_v} \frac{f_{vv}}{f_{wv}}\right)\ ,
\]
where $c_v= \frac{\beta}{1-\beta N/(P_v G_{vv})} > \beta$ is a constant depending only on universal constants and the signal strength $G_{vv}$ of $l_v$, indicating the extent to which the ambient noise affects the transmission. 
We drop $\calP$ when clear from context.
Furthermore let $a_v(v) = 0$. For a set $S$ of links and link $l_v$, let $a_v(S) = \sum_{l_w \in S} a_v(w)$ be the \emph{out-affectance} of $v$ on $S$ and $a_S(v) = \sum_{l_w \in S} a_w(v)$ be the \emph{in-affectance}.
Assuming $S$ contains at least two links we can rewrite Eqn.~\ref{eqn:sinr} as $a_S(v) \leq 1$ and this is the form we will use.
A set $S$ of links is \emph{feasible} if $a_S(v) \leq 1$ 
and more generally \emph{$K$-feasible} if $a_v(S) \le 1/K$.



Define $d_{vw} = d(l_v,l_w) = \min(d(s_v,r_w),d(s_w,r_v), d(s_v,s_w),d(r_v,r_w))$ as the (quasi-)distance between two links $l_v$ and $l_w$.
Let $d_{vv} = d(s_v,r_v)$.
A link $l_v$ is said to be \emph{$\eta$-separated} from a set $L$ of links, for
parameter $\eta$, if $d(l_v,l_w) \ge \eta d_{vv}$ for every $l_w \in
L$.  A set $L$ is $\eta$-separated if each link in $L$ is
$\eta$-separated from the rest of the set.

Let $e$ refer to the base of the natural logarithm and recall that $1+x \le e^x$, for any value $x$.


\section{Fading Properties and Distributed Algorithms}
\label{sec:fading}

In the study of distributed algorithms in $\geomodel$ in the plane, the
standard assumption is that the path loss constant $\alpha$ is
strictly larger than 2. The reason for this is that when $\alpha > 2$, nodes
that are spatially well separated will not affect each other by too
much, a property that does not hold when $\alpha \le 2$.
This property is generalized to doubling metrics whose doubling dimension is strictly smaller than the path loss constant $\alpha$, dubbed \emph{fading metrics} \cite{us:talg12}.
We call this property, that the sum of affectances from spatially separated transmitting nodes 
converges, the \emph{fading} property. For the most common type of distributed algorithm to work, this has to be bounded.

We define a parameter $\gamma$ that captures the fading effect.
Let $\calX(r)$ be the space of all $r$-separated subsets in $V$.
\begin{definition}
The \emph{fading value} $\gamma_z(r)$ of a node $z$
relative to a separation term $r$ is 
\[ \gamma_z(r) = r \max_{X \in \calX(r)} \sum_{x \in X} 1/f_{xz}\ . \]
The \emph{fading parameter} $\gamma$ of a decay space 
is the maximum fading value of a node in the space,
$\gamma = \gamma(r) = \max_{z \in V} \gamma_z(r)$,
relative to a given separation term $r$.
\label{defn:fading}
\end{definition}


That is, the total interference $I_S(z)$ experienced by a node $z$ from an $r$-separated set $S$ (of senders) using uniform power $P$ is at most $\gamma(r) \cdot P/r$. Thus, if the intended signal comes from an $r$-neighborhood (in decay space), then the resulting affectance is bounded by $a_S(z) \le \frac{\gamma(r) P/r}{P/r} = \gamma(r)$.

Until now, $\gamma$ has been expected to be an absolute constant.
However, we can now simply treat it as a parameter and thus
handle arbitrary decay spaces by distributed algorithms.
Thus, we can achieve significantly more generality than before.
This would necessarily come at the cost of extra time complexity.

\subsection{Fading spaces}

We identify a large class of decay spaces for which the fading parameter is small.
These are generalizations of fading metrics.

First, some additional notation.  The \emph{$t$-ball} $B(y,t) = \{x
\in V | f(x,y) < t\}$ centered at $y$ with radius $t$
contains all points $x$ for which decay to $y$ is less than $t$.  A
set $Y \subseteq V$ is a \emph{$t$-packing} if $f(x,y) > 2t$, for any
$x,y \in V$.  Thus, $Y$ is a $t$-packing iff the set
$\{B(y,t)\}_{y \in Y}$ of balls are disjoint.  The \emph{$t$-packing
  number} $\mathcal{P}(\mathcal{B}, t)$ is the size of the largest
$t$-packing into the body $\mathcal{B}$.

Intuitively, a space is \emph{doubling} if the number of mutually unit-separated points 
within a given distance from a center increases by at most a polynomial of the distance.

\begin{definition}
Let $\calD = (V,f)$ be a decay space.
Define $g_\calD(q) = \max_{x \in V} \max_{r \in \reals^+}
\calP(B(x,r),r/q)$, as the size of the densest $q$-packing in $\calD$.
The \define{Assouad dimension $A$ of $\calD$ with parameter $C$} is
given by
  \[ A(\calD) = \max_q \log_q \left( \frac{g(q)}{C} \right)\ . \]
\end{definition}
$A(\calD)$ is in effect the minimum degree $k$ for which sizes of
$t$-packings can be bounded by $O(t^k)$, for all $t$.
Note that that $A(\mathbb{R}^k) = k$ \cite{Heinonen}.

\begin{definition}
A \emph{fading space} is a decay space $\calD$ with Assouad dimension
strictly smaller than 1, $A(\calD) < 1$, w.r.t.\ some absolute constant $C$.
\end{definition}

\subsection{Annulus argument}
Most randomized algorithms (e.g. in \cite{PODC13} and \cite{Yu12}) ensure that in any given neighborhood
(defined as the set of nodes to which a given node can communicate directly),
the expected number of transmissions in a slot is bounded above by a certain constant.
This ensures that the total expected affectance from other nodes transmitting
is also bounded by a (different) constant.
By adjusting the constants appropriately, one can focus only on the local behavior.
Some deterministic algorithms similarly ensure a spatial separation of sending (and thus possibly interfering) nodes and use this property to bound the total affectance from these nodes.

All proofs of the discussed sort use a common approach.  They define
some type of separation between interfering nodes which can be a
(probabilistic) constant density, a hard minimum distance between
nodes or links or similar.  Then the interference at a node $v$ is
bounded, either directly or, if the node is receiver of a
predefined link, as the (possibly probabilistic) affectance on the
node.  To do this we draw concentric circles around $v$, cutting the
space around $v$ up into annuli.  Using the separation of the
interferers, we argue that the number of interferers that can be
packed in the annulus at distance $i$ is bounded by a polynomial
depending on $i$ and the Assouad dimension of the
space.  

We argue that a general version of this `annulus argument' still holds
when directly used in fading decay spaces, after which we indicate
how other different variations carry over.

Recall the Riemann $\hat{\zeta}$-function,  $\hat{\zeta}(x) = \sum_{n \ge 1} n^{-x}$, which is known to converge for $x > 1$.
We build on a similar result in \cite{us:talg12} for metric spaces.


\begin{theorem}\label{annulusargument}
The fading parameter of a decay space $\calD = (V,f)$  with Assouad
dimension $A < 1$ and related constant $C$ is bounded by $\gamma = 
\gamma(r) \le C 2^{A+1}(\hat{\zeta}(2-A)-1)$. 
\end{theorem}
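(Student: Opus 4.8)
The plan is to fix an arbitrary node $z$ and an arbitrary $r$-separated set $X \in \calX(r)$, bound $r\sum_{x\in X}1/f_{xz}$ by the claimed quantity, and then take the maximum over $z$ to bound $\gamma(r)$. The argument is a clean ``annulus argument'' carried out directly with the decay $f$ rather than with a distance, paralleling the metric estimate of \cite{us:talg12}. First I would record the packing consequence of the Assouad dimension: since $A(\calD)=\max_q\log_q(g_\calD(q)/C)<1$, we have $g_\calD(q)\le C q^{A}$ for every $q$, so the number of points of an $r$-separated set inside a ball $B(z,\rho)$ is at most $\calP(B(z,\rho),r/2)\le g_\calD(2\rho/r)\le C(2\rho/r)^{A}$, because an $r$-separated set is an $(r/2)$-packing and $\rho/(2\rho/r)=r/2$.

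Next I would slice $X$ into annuli $A_n=\{x\in X : (n-1)r\le f_{xz}<nr\}$. The configurations relevant to fading (indeed, the only ones for which $\gamma_z(r)$ is finite, and exactly those arising in the affectance/interference interpretation, where the interferers are $r$-separated also from $z$) have $f_{xz}\ge r$ for every $x\in X$, so the innermost annulus $A_1$ is empty and the slicing starts at $n=2$. Writing $M_n=|X\cap B(z,nr)|$, the packing bound gives $M_n\le C(2n)^{A}$, and on $A_n$ every term obeys $1/f_{xz}\le 1/((n-1)r)$, so $r\sum_{x\in X}1/f_{xz}\le \sum_{n\ge 2}|A_n|/(n-1)$.

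The heart of the proof is then a summation-by-parts step. Using $|A_n|=M_n-M_{n-1}$ with $M_1=0$ and telescoping the weights, the sum becomes $\sum_{n\ge 2} M_n(\tfrac{1}{n-1}-\tfrac1n)=\sum_{n\ge 2} M_n/(n(n-1))$; substituting $M_n\le C(2n)^A$ and using $n/(n-1)\le 2$ for $n\ge 2$ yields $C2^{A+1}\sum_{n\ge2}n^{A-2}=C2^{A+1}(\hat{\zeta}(2-A)-1)$, which is exactly the claimed bound, the series converging precisely because $A<1$ (so $2-A>1$). I expect the main obstacle to be this convergence/constant-bookkeeping rather than any deep difficulty: the raw annulus estimate $|A_n|\le C(2n)^A$ only yields the divergent $\sum n^{A-1}$, and it is the telescoping of the radial weights $1/(n-1)$ that supplies the extra $\Theta(1/n)$ factor converting it into the convergent $\sum n^{A-2}$, thereby pinning $A<1$ as the exact fading threshold. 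I would also verify the vanishing boundary term $M_N/(N-1)\to0$ (immediate since $A<1$) and check the indexing carefully, since it is the empty innermost annulus that produces the ``$-1$'' in $\hat{\zeta}(2-A)-1$ in place of the full zeta value.
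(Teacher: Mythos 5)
Your proof is correct and follows essentially the same route as the paper's: the same annulus decomposition around the listening node, the same Abel-summation/telescoping step converting $\sum_n |A_n|/(n-1)$ into $\sum_n M_n/(n(n-1))$, and the same Assouad packing bound with a factor-$2$ index adjustment, yielding the identical constant $C\,2^{A+1}(\hat{\zeta}(2-A)-1)$. Your explicit hypothesis that $f_{xz}\ge r$ for all $x\in X$ (making the innermost annulus empty, which produces the $-1$ term) is exactly the paper's implicit assumption that the listener is itself a member of the $r$-separated set, and your remaining deviations (annulus width $r$ versus $R=r/2$, explicitly checking the vanishing boundary term $M_N/(N-1)\to 0$) are cosmetic bookkeeping.
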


\begin{proof}
Let $R = r/2$.
Since $S$ is $r$-separated, the nodes in $S$ form an $R$-packing.
Since $\calD$ is doubling, there is a constant $C$ such that  for any $t > 0$, 
the maximal size of an $R$-packing in a ball of radius $tR$ centered around a point $x$ is, 
\begin{equation}
\mathcal{P}(B(x, tR),R) \leq C t ^A\ .
\end{equation}

We bound the received signal $I_S(x)$ at a listening node $x \in S$.
Let $g$ be a number.
Let $S_g = \{ y \in S' : f(y, x)< gR\}$ and let $T_g = S_g \setminus S_{g-1}$.
Then $S_2 = \emptyset$ since $S$ is $r$-spaced.

We first note that since $S_{g-1} \subseteq S_g$ and $S_2 = \emptyset$,
\[ \sum_{g \ge 3} \frac{|S_g \setminus S_{g-1}|}{g-1} 
  = \sum_{g \ge 3} \frac{|S_g|}{g-1} - \sum_{g \ge 2} \frac{|S_g|}{g}
  = \sum_{g \ge 3} |S_g| \left(\frac{1}{g-1} - \frac{1}{g}\right) 
  = \sum_{g \ge 3} \frac{|S_g|}{g(g-1)} \ . \]

Since each sender $y \in T_g$ is of distance at least $(g-1)R$ from $x$
the received signal from $y$ on $x$ is bounded by
$$I_y(x) = P/f(y, x)\leq \frac{P}{(g-1)R}\ \quad \forall y \in T_g\ .$$
Then,
\[ I_S(x) = \sum_{g \geq 3} I_{T_g}(x)
   \le \sum_{g \geq 3}|S_g \setminus S_{g-1}| \frac{P}{(g-1)R}
   \le \frac{P}{R} \sum_{g \geq 3} |S_g| \frac{1}{(g-1)^2}\ . \]
By the doubling property of $\calD$, the size of $S_g$ is 
  $$|S_g| \le \mathcal{P}(B(x,(g+1)R),R) \leq C(g+1)^A\ .$$
Thus, using that $g+1 \le 2(g-1)$, since $g\geq 3$,
  $$ \frac{|S_g|}{(g-1)^2} \leq \frac{C(g+1)^A}{(g-1)^2} = \frac{C 2^A}{(g-1)^{2-A}}\ .$$
Continuing,
  $$I_S(x) \leq \frac{P}{R} \sum_{g \geq 3} |S_g| \frac{1}{(g-1)^2} \leq \frac{2P}{r} \sum_{g \geq 3} \frac{C 2^A}{(g-1)^{2-A}} 
\leq \frac{2P}{r} C 2 ^A \left( \hat{\zeta}(2-A) -1 \right) =
\frac{\gamma(r) \cdot P}{r}\ ,$$
using the definitions of $R$ and $\gamma(r)$.
\end{proof}


\subsection{Common usage of the annulus argument}
We list some common types of lemmas in which the annulus argument is used and show how to use Theorem \ref{annulusargument} in the proofs for these lemmas.

A common usage of the annulus argument is to prove the following: if
$L$ is a set of links, using a uniform power assignment $P$, with
senders of a minimal mutual distance $r$ and with the longest link of
length at most a given constant times $r$, then $L$ forms a $q$-feasible set. 
For sets as described in Theorem \ref{annulusargument}, where all
\emph{nodes} are $r$-separated and a maximum link decay $f_{vv}$ at most
constant $r$, the transition is straightforward.
By the definition of affectance and Theorem \ref{annulusargument}, 
the affectance of $L$ on link $l_v$ with maximum decay $f_{vv}$ is at most 
$$ a_L(v) \leq \frac{I_L(v)}{P G_{vv}} \leq \frac{f_{vv} \cdot \gamma(r)}{r} \ ,$$
where $I_L(v) = \sum_{l_w \in L} 1/f_{wv}$.
To obtain a $q$-feasible set, we simply set $r = f_{vv}\gamma(r)/q$.

However, if only a separation on senders is defined (e.g. in \cite{us:talg12}), 
we use the triangular inequality to bound the interference at $r_v$ in terms of interference at $s_v$.
Requiring $f_{vv} < R$, we obtain $ I_L(r_v) \leq 2^\zeta I_L(s_v)$, since for any sender $s_x \in L$ by the triangle inequality
$$f(s_x, r_v)^{1/\zeta} \geq f(s_x, s_v)^{1/\zeta} - f(s_v, r_v) ^ {1/\zeta} \geq f(s_x, s_v)^{1/\zeta}/2\ ,$$
using that $f_{vv} < R \le f(s_x, s_v)^{1/\zeta}/2$.
And thus $f_{xv} \ge f(s_x,s_v)/2^\zeta$, so the argument holds as
before by adjusting $r$ with an extra $2^\zeta$ factor.
When $R \gg f_v$, the overhead factor is correspondingly smaller.

Examples of problems with centralized algorithms that use this form of annulus argument:
connectivity \cite{MoWa06,moscibroda06b,Moscibroda07,SODA12},
scheduling \cite{chafekar07,tonoyan2012},
flow-based throughput \cite{CKMPS08},
online capacity maximization \cite{fanghanel2013online},
and bounds on the utility of conflict graphs \cite{tonoyan2013,tonoyan2013a}.

For randomized algorithms, the annulus argument is used in 
a similar way to bound expected interference.  The expected
interference in a disk is bounded by arguments specific to the
analyzed algorithm.
 These arguments may or may not translate to the
decay space as discussed in Sec.~\ref{sec:theorytransfer}.
Instead of adjust the separation term $r$, thy typically adjust the transmission probabilities.
Once the expected interference in a disk is bounded, however, the
argumentation for bounding the total expected interference at a node
$x$, $E(I_S(x))$ follows Theorem \ref{annulusargument}.  


The probabilistic version of the annulus argument forms the core of
the analysis for many randomized distributed algorithms which often carry over without any significant further adjustments.
Example include (distributed) coloring \cite{YWHLa11}, local broadcast
\cite{Goussevskaia2008Local, Yu11, Yu12, FOMC12}, broadcast
\cite{DaumGKN13} and multiple-message broadcast \cite{YuHWTL12,
  YuHWYL13}, capacity \cite{pei2013distributed}, 
dominating set \cite{ScheidelerRS08} and (multihop) connectivity  \cite{PODC12, PODC13}, and dynamic packet scheduling \cite{pei2012low}.


\subsection{Beyond fading spaces}
Fading spaces do not completely characterize spaces with a
bounded fading parameter. One reason is that the definition of
doubling metrics is scale-invariant in that the packing
constraint holds for balls of any size, whereas we are often only
interested in balls of a fixed size (or in a limited range of sizes).

Consider, for instance, the metric space formed by a star centered at
node $x_0$ with $k$ leaves $x_1, x_2, \ldots, x_{k}$ at distance 
$k^2$ and one leaf $x_{-1}$ at distance $r$.  Suppose the
decay $f_{xy}$ equals the distance (so $\zeta=1$).  The doubling
dimension of this space is $k$, so unbounded.  Suppose also we are
interested in the separation term $r$, i.e., how well we can transmit
from $x_0$ to $x_{-1}$ in the presence of transmissions from the other
nodes. If $r=o(k)$, we find that the total interference at node
$x_{-1}$ is $\sum_{i=1}^k 1/k^2 = 1/k$, which is asymptotically
smaller than the signal received from $x_{0}$.

\section{Dependence on the Metricity in Approximations}
\label{sec:dependence}

With the pinpointing of the metricity parameter $\zeta$ as a key 
indicator of a decay space, the question
arises how it affects the complexity of fundamental problems. 
This differs from {\geomodel} where the path loss term
$\alpha$ has traditionally been viewed as a constant.


We explore here the approximability of the {\capacity} problem as a
function of innate properties of the decay space in question.
Given a set $L$ of links, the {\capacity} problem asks for 
maximum cardinality subset
of $L$ that is feasible.
The {\capacity} problem is fundamental, not only because it addresses
the basic question of how much wireless communication
can coexist, but also because it has been the underlying core routine
in other problems, including scheduling \cite{GHWW09}, throughput maximization (via
flow) \cite{wanwireless}, spectrum auctions \cite{hoeferspaa}, spectrum sharing \cite{us:Infocom12}, and connectivity and aggregation \cite{SODA12,PODC12}.

Our generic statement, Prop.~\ref{prop:metric}, along with known
approximation results \cite{SODA11,KesselheimESA12} in general metrics, 
implies that {\capacity} in decay spaces can be approximated
within a function of $\zeta$. Specifically, the
approximation of \cite{SODA11} (for monotone power) is exponential in $\zeta$, 
which was refined to $3^\zeta$ in \cite{us:ICDCS14}.

We can also observe that the known hardness construction for
``abstract SINR'' \cite{GHWW09} (see also \cite{SODA11}) implies that
$2^{\zeta(1-o(1))}$-approximation for {\capacity} is hard.  We include
the argument in the appendix for completeness.

\begin{theorem}
 \prob{Capacity} of equi-decay links is hard to approximate within $2^{\zeta(1-o(1))}$ factor.
This holds even if the algorithm is allowed 
arbitrary power control against an adversary that uses uniform power.
\label{thm:cap-hardness}
\end{theorem}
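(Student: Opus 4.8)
The plan is to reinterpret the abstract-SINR hardness instance of \cite{GHWW09} (see also \cite{SODA11}) as a decay space, compute its metricity $\zeta$, and show that the inapproximability factor of that construction, reexpressed in terms of $\zeta$, is exactly $2^{\zeta(1-o(1))}$. Concretely, I would instantiate the reduction on a problem with polynomially large inapproximability gap — maximum independent set, hard within $n^{1-\epsilon}$ — producing one equi-decay link per vertex, all with common signal decay $f_{vv}=D$. For a non-edge $(i,j)$ I set the cross decays to a large value $\Delta \approx cDn$, so that the uniform-power affectance of one such link on another is at most $1/n$; for an edge I set both cross decays to a small value $\delta \le \sqrt{c_i c_j}\,D$.

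First I would establish the two sides of the gap. In the YES case, a large independent set is feasible under uniform power, since the summed in-affectance is $a_S(v)=\sum_w c_v \frac{D}{f_{wv}} \le n\cdot\frac1n = 1$; this is precisely the sense in which the optimum is witnessed by the uniform-power adversary. In the NO case I need robustness against arbitrary power control, which I obtain from the \emph{power-invariance} of edge conflicts: for a conflicting pair the product of the two cross-affectances is $a_{l_j}(l_i)\,a_{l_i}(l_j)=c_i c_j\,\frac{f_{ii}f_{jj}}{f_{ij}f_{ji}}$, independent of the powers $P_i,P_j$. Choosing $\delta$ so this product is at least $1$ forces at least one of the two affectances to be $\ge 1$ under \emph{every} power assignment; hence any feasible set, under any powers, is an independent set and therefore small. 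The \Capacity{} gap thus equals the independent-set gap $n^{1-\epsilon}$, even against arbitrary power control.

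It remains to bound the metricity. All decays in the instance — the signal decays $D$, the various sender-sender, receiver-receiver and sender-receiver decays, and the edge/non-edge cross decays — lie in a range $[\,D,\ cDn\,]$ of ratio $O(n)$. By the universal bound noted after the metricity definition, $\zeta \le \zeta_0 = \lg\!\big(\max_{x,y} f / \min_{x,y} f\big) = \lg(cn) = (1+o(1))\lg n$. Combining with the gap, $n^{1-\epsilon}=2^{(1-\epsilon)\lg n}\ge 2^{(1-\epsilon)\zeta/(1+o(1))}=2^{\zeta(1-o(1))}$, which is the claimed bound. (The positive $3^\zeta$-approximation of \cite{SODA11,us:ICDCS14} incidentally forces $\zeta=\Theta(\lg n)$ on any such family, so the parametrization is tight and the metricity is genuinely $\approx\lg(cn)$ rather than merely bounded by it.)

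The main obstacle I anticipate is bookkeeping rather than any deep new idea, and it has two parts. The first is consistently assigning decays to \emph{all} ordered node pairs — each link contributes both a sender and a receiver — so that every edge conflict is power-invariant in both directions and every non-edge stays below the $1/n$ affectance threshold, while keeping the entire decay matrix inside $[\,D,\ cDn\,]$ so the metricity bound survives; a mismatch between sender-receiver and sender-sender decays is the easiest place to accidentally inflate $\zeta$. The second is controlling the constant $c_v=\beta/(1-\beta N/(P_v G_{vv}))$, which depends on the received signal strength and hence on the algorithm's chosen powers, so that the power-invariant product argument is not spoiled by the noise term; assuming negligible noise (or imposing a fixed lower bound on received signal) keeps each $c_v$ within a bounded range and removes this issue.
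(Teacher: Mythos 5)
Your proposal is correct and follows essentially the same route as the paper's proof: a reduction from \prob{Max Independent Set} with one equi-decay link per vertex, small cross decays on edges versus $\Theta(Dn)$ cross decays on non-edges, the power-invariant product-of-affectances argument (the same one the paper uses in its appendix constructions) to rule out power control, and the universal bound $\zeta \le \lg\bigl(\max f/\min f\bigr) = (1+o(1))\lg n$ converting the $n^{1-o(1)}$-hardness of independent set into $2^{\zeta(1-o(1))}$-hardness for \Capacity. If anything, your write-up is the more careful one --- just make the edge-decay choice strict, $\delta < \sqrt{c_i c_j}\,D$, so the product of the two cross-affectances strictly exceeds $1$; note also that the paper's stated assignment (decay $2$ on edges, $1/n$ on non-edges) reads as gains rather than decays, and your orientation (small decay on edges, large on non-edges) is the consistent one.
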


This leaves the question whether better results are possible in the
Euclidean metric and comparable decay spaces. 
Surprisingly, the answer depends on the exact definition of the
metricity parameter.
Specifically, {\capacity} with uniform power is then
approximable within a polynomial of $\zeta$, while for
a natural variant of the $\zeta$-parameter, exponential dependence
is still necessary.

\subsection{Improved Approximations in Bounded Growth Decay Spaces}
\label{sec:polya}

We show here that {\capacity} with uniform power can be approximated
within polynomial factors of $\zeta$ in Euclidean metrics.  More
generally, this holds for decay spaces of bounded growth, as we shall
define shortly.  Interestingly, it does not rely on the fading
behavior of the plane (i.e., that $\alpha > 2$). This appears to be
the first instance in the signal-strength literature where better
results are shown to be obtainable in the plane independent of
$\alpha$ than for general metrics.

The intuitive reason why uniform power in the plane proves to be easier
is two-fold.  The main cause for exponential dependence on $\zeta$
comes from the use of the triangular inequality. If one can ensure
that one angle is highly acute, the overhead of the inequality goes
down accordingly. In particular, the overhead in switching the
reference from a receiver to a sender of a link goes down if the
length of the link relative to the other distances is small.

We shall show that links with uniform power in bounded-growth decay spaces
satisfy a useful structural property that allows for improved
approximation for numerous problems.

\mypara{Bounded Growth Decay Spaces}
We shall consider decay spaces that have upper bounds on two measures
that restrict growth: 
the doubling dimension (from Sec.~\ref{sec:fading}), 
and the independence dimension, defined in \cite{GHWW09} for metrics
and adapted as follows to decay spaces.

\begin{definition}[\cite{GHWW09}]
A set $I$ of points in a decay space $\calD=(V,f)$ is \emph{independent} w.r.t.\ a point $x \in V$ if $B(z, f_{zx})\cap I = \{ x \}$ for each $z \in I$.
The \emph{independence dimension} of $\calD$ is the size of the largest independent point set.
\end{definition}

Spaces of bounded independence dimension $\inddim$ have the following
useful property: for any point $x \in V$, there is a set $J_x \subset
V$ of at most $\inddim$ points that \emph{guard} $x$ in the following
sense: $\min_{y \in J_x} d(z,y) \le d(z,x)$, for any point $z \in V \setminus \{x\}$. A node $y$ \emph{guards} node $x$ \emph{from} node $z$ if
$d(z,y) \le d(z,y)$.

Welzl \cite{Welzl08} has made a number of useful observations of
metrics of bounded independence dimension. He showed that the number of guards needed in a
metric is indeed exactly its independence dimension. In a Euclidean
space $\mathbb{R}_n$, it equals the maximum number of unit vectors
that form pairwise angles of more than $60\degree$.
Therefore, the independence is at most the so-called kissing number,
the maximum number of disjoint open balls of radius 1 that can touch
the unit ball. This number grows exponentially in the dimensions but
its exact value is not known for most dimensions.

As a simple example, let us see how six guards suffice in the plane.
Given a point $x$, divide the plane into six $60\degree$ sectors around $x$
and partition $V$ accordingly into sets $S_1, S_2, \ldots, S_6$.
Let $J_x$ consist of the nearest point to $x$ in each of the six sectors.
The guarding property follows from the fact that the angle $\angle g_i x y_i$ is at least $60\degree$, for each point $y_i \in S_i$ and guard $g_i \in J_x$.

We define a decay space to be \emph{bounded-growth} if it has bounded
independence dimension and its quasi-distance metric has a bounded
doubling dimension.  (The dimension of a decay space and its
quasi-distance metric is the same.)

The doubling and independence dimensions are actually incomparable.
The uniform metric, where all decays equal 1, is of independence
dimension 1 but unbounded doubling dimension.
The following curious construction of Welzl \cite{Welzl08} gives a
metric of doubling dimension 1 whose independence dimension is unbounded:
Let $V=\{v_{-1}, v_0, v_1, \ldots, v_n\}$ with
$d(v_{-1},v_i)=2^i-\epsilon$, for $0 < \epsilon \le 1/4$, and $d(v_j, v_i)
= 2^i$, for $i, j \ne -1, j < i$. We leave it to the curious reader to
verify that any ball (only those of radius $2^i$ or $2^{i}-\epsilon$
matter) can be covered with two balls of half the radius and that $V
\setminus \{v_{-1}\}$ are independent with respect to $v_{-1}$.

\mypara{Amicability}
The following definition originates in \cite{infocom11} and 
was formally stated in \cite{dams2013sleeping} as \emph{$C$-independent} conflict graphs.

\begin{definition}
A set $L$ of links is \define{$h(\zeta)$-amicable} if there is a constant $c$ such that, for
any feasible subset $S \subseteq L$, there is a subset $S' \subseteq S$ with $|S'| \ge c |S|/h(\zeta)$ such that for any vertex $v \in L$, $a_v(S') \le c$ (using uniform power).
\end{definition}

It is known that sets in {\geomodel} in metric spaces are $2^{O(\alpha)}$-amicable \cite{infocom11}.

Various decentralized capacity-type problems with uniform power have been treated with no-regret minimization techniques, relying only on the amicability property of the instances. This started with a distributed constant approximation for {\capacity} \cite{Dinitz2010,infocom11}, and was extended to deal with jamming \cite{dams2013jamming}, online requests against stochastic adversaries \cite{GHKSV13}, and changing spectrum availability \cite{dams2013sleeping}.
Our $\alpha^{O(1)}$-bound on amicability improves these results in 
the bounded-growth metrics.

We show that growth-bounded instances are $\zeta^{O(1)}$-amicable,
thus obtaining improved approximations for the above problems (as
functions of $\zeta$).

\mypara{Capacity approximation via bounds on amicability}

To bound amicability, we first show how to turn feasible
sets in doubling spaces into well separated sets at 
limited cost. The proof is deferred to the appendix.

\begin{lemma}
  Let $S$ be a feasible set of links in a decay space whose
  quasi-distance metric has doubling dimension $A'$.  Then, $S$ can be
  partitioned into $O(\zeta^{2A'})$ sets, all of which are
  $\zeta$-separated.
\label{lem:planar_separation}
\end{lemma}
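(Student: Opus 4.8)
The plan is to recast the statement as a graph-coloring problem and to bound the chromatic number through a packing argument in the spirit of Theorem~\ref{annulusargument}. Form the conflict graph $H$ on $S$ in which $l_v$ and $l_w$ are adjacent exactly when they fail to be $\zeta$-separated, i.e.\ $d(l_v,l_w) < \zeta \max(d_{vv},d_{ww})$. A proper coloring of $H$ with $k$ colors is precisely a partition of $S$ into $k$ $\zeta$-separated subsets, so it suffices to show $\chi(H) = O(\zeta^{2A'})$, which I would obtain by greedily coloring the links in order of increasing length $d_{vv}$ and bounding, for each link, the number of already-colored (hence shorter-or-equal) conflicting neighbors.

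The first substantive step is to extract a baseline separation from feasibility alone. Since every affectance term is capped at $1$, feasibility of $S$ forces, for each pair, the governing quasi-distance to exceed the relevant link length: from $a_w(v)\le 1$ and $a_v(w)\le 1$ one gets, with uniform power, $d(s_w,r_v)\ge d_{vv}\,c_v^{1/\zeta}$ and $d(s_v,r_w)\ge d_{ww}\,c_w^{1/\zeta}$, where $c_v,c_w>\beta\ge 1$. Feeding these into the triangle inequality of the quasi-metric $\calD'$ and using $c^{1/\zeta}-1 = \Theta(1/\zeta)$ bounds the sender-sender and receiver-receiver quasi-distances from below by $\Omega(d_{ww}/\zeta)$. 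Thus a feasible set is already separated at the coarse scale $\Omega(1/\zeta)$; the target separation is $\zeta$, so the ratio between the desired and the guaranteed separation is $\Theta(\zeta^2)$.

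The second step is the packing bound. Fixing $l_v$ and looking at the same-or-shorter links lying within quasi-distance $\zeta d_{vv}$ of it, these occupy a ball of radius $O(\zeta d_{vv})$ whose members are mutually separated at scale $\Omega(d_{vv}/\zeta)$ by the baseline separation; the doubling property of $\calD'$ then caps their number by $O\big((\zeta d_{vv}/(d_{vv}/\zeta))^{A'}\big)=O((\zeta^2)^{A'})=O(\zeta^{2A'})$, exactly as the annulus argument of Theorem~\ref{annulusargument} converts a separation-to-radius ratio into a packing bound. Greedy coloring then uses $O(\zeta^{2A'})$ colors.

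I expect the main obstacle to be the interaction across length scales: short links can cluster near a long one, and a priori there are unboundedly many length classes, so a per-class packing bound risks being multiplied by the aspect ratio of $S$. The delicate point is that the baseline separation is relative to each link's own (possibly tiny) length, so it does not by itself forbid such clustering. I would control this by invoking feasibility in both the in- and out-affectance directions simultaneously: the total affectance mass a link may send or receive limits how many clusters at disparate scales can coexist, letting the per-scale doubling estimate aggregate to $O(\zeta^{2A'})$ rather than accruing a factor for each scale. Reconciling the exponential-looking slack of the raw feasibility bound with the polynomial doubling estimate is the technical heart of the argument, and the reason the full proof is deferred to the appendix.
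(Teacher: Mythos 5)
Your high-level architecture---extract a baseline separation from feasibility, then amplify it to $\zeta$-separation via a doubling packing bound and greedy coloring along a length ordering---is the same as the paper's, which composes a signal-strengthening step with Lemma~\ref{lem:onezetasep} (baseline $1/\zeta$-separation) and Lemma~\ref{lem:sep-strength} (separation boosting with $\tau=1/\zeta$, $\eta=\zeta$, giving $O((\eta/\tau)^{A'})=O(\zeta^{2A'})$ classes). But your first step has a genuine quantitative gap. From feasibility alone, the per-pair information is $f_{wv} \ge c_v f_{vv}$ (note in passing that $a_w(v)\le 1$ is vacuous because of the $\min(1,\cdot)$ cap; you need the strict inequality that holds inside a feasible set with at least two links), hence $d(s_w,r_v)\ge c_v^{1/\zeta}\, d_{vv}$, and your key estimate $c^{1/\zeta}-1=\Theta(1/\zeta)$ is valid only if $c$ is bounded away from $1$. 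The paper assumes only $\beta\ge 1$ and $c_v>\beta$, so with $\beta=1$ and negligible noise $c_v$ can be arbitrarily close to $1$, and then $c_v^{1/\zeta}-1=\Theta\bigl((\ln c_v)/\zeta\bigr)$ is arbitrarily smaller than $1/\zeta$: your claimed baseline separation at scale $\Omega(d_{vv}/\zeta)$ does not follow, and with it the $\Theta(\zeta^2)$ ratio feeding the packing bound collapses. The paper manufactures exactly the missing margin with signal strengthening (Lemma~\ref{lem:signal-strength}): it first partitions $S$ into at most $(e^2/\beta+1)^2$ sets, each $e^2/\beta$-feasible, so that cross-pairs satisfy bounds of the form $f_{wv}\ge e^2 f_{vv}$, whence $(1+2/\zeta)^\zeta\le e^2$ yields clean $1/\zeta$-separation. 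This preliminary constant-size decomposition is the missing idea in your write-up; without it, no choice of constants rescues the argument. (A smaller omission: separation is defined via $d(l_v,l_w)=\min$ of \emph{four} node distances, so besides sender--sender and receiver--receiver you must also handle the cross distances; Lemma~\ref{lem:onezetasep} runs a case analysis for precisely this reason.)

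Your closing paragraph also mislocates the ``technical heart.'' In the paper's boosting step (Lemma~\ref{lem:sep-strength}) feasibility is not used at all---only $\tau$-separatedness, doubling, and a degeneracy argument: for each link $l_v$ one bounds $|R_v|$, the links whose receivers lie within $\eta d_{vv}$ of $r_v$, by a single Assouad packing estimate $C((2\eta+1)/\tau)^{A'}$, orders the links by length, and colors first-fit with $\rho+1$ colors. No multi-scale aggregation and no in/out-affectance mass accounting appears anywhere, because each link is only ever charged against neighbors counted in one ball at a single scale. Your proposed mechanism (``the total affectance mass a link may send or receive limits how many clusters at disparate scales can coexist'') is stated as a hope with no estimate behind it, and if your proof really needed it, you would have to develop it from scratch---it is not a routine step, and it is not how the deferred appendix proof proceeds. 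So the two concrete repairs are: insert the signal-strengthening decomposition before extracting separation, and replace the speculative cross-scale affectance argument with the inductive-ordering/packing count of Lemma~\ref{lem:sep-strength}.
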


We are now ready to prove the structural result of this section.

\begin{theorem}
  Let $L$ be a set of links in a decay space of independence dimension
  $\inddim$ and whose quasi-distance metric has doubling dimension
  $A'$.  Then, $L$ is $O(\inddim \zeta^{2A'})$-amicable.
\label{thm:indep}
\end{theorem}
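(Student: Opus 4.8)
The plan is to reduce to a single well-separated set and then exploit the bounded number of guards. First I would invoke Lemma~\ref{lem:planar_separation}: a feasible subset $S \subseteq L$ partitions into $O(\zeta^{2A'})$ $\zeta$-separated subsets, so by pigeonhole one of them, call it $S_1$, satisfies $|S_1| \ge \Omega(|S|/\zeta^{2A'})$. This already accounts for the $\zeta^{2A'}$ factor, and it remains to show that a $\zeta$-separated feasible set is $O(\inddim)$-amicable, i.e.\ that one can find $S' \subseteq S_1$ with $|S'| \ge \Omega(|S_1|/\inddim)$ and $a_v(S') = O(1)$ for every $v \in L$.

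The core of the argument is a guard-based bound on the out-affectance $a_v(S_1) = \sum_{w \in S_1} a_v(w)$ of an arbitrary link $l_v \in L$. Fix $l_v$ and let $J_{s_v}$ be the set of at most $\inddim$ guards of its sender $s_v$ supplied by the independence dimension. For each $w \in S_1$ the guarding property (applied with $z = r_w$) yields a guard $y \in J_{s_v}$ with $d(r_w, y) \le d(s_v, r_w)$; I would group the links of $S_1$ by this assigned guard into at most $\inddim$ groups. Within one group all receivers share a common nearby guard $y$, and I would use this to switch the reference point of the affectance from $s_v$ to $y$. The key quantitative input is the observation highlighted before the statement: when a link is short relative to its distance to $s_v$, the triangle-inequality overhead of this switch is a constant rather than the naive $2^\zeta$, since the relevant angle is forced to be acute. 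Combined with the $\zeta$-separation of the receivers and the doubling property, this should bound the contribution of each group by $O(1)$, giving $a_v(S_1) = O(\inddim)$ for every $v$. Finally, since this bound holds simultaneously for all $v \in L$, a random subsample that keeps each link of $S_1$ with probability $\Theta(1/\inddim)$ — or, deterministically, a proper colouring of the conflict graph on $S_1$, whose degree is $O(\inddim)$ after thresholding away small affectances — produces the desired $S'$ with $a_v(S') = O(1)$ and $|S'| \ge \Omega(|S_1|/\inddim)$. Chaining the two reductions yields $O(\inddim \zeta^{2A'})$-amicability.

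I expect the main obstacle to be the per-group far-field estimate in the second step, precisely because it must hold \emph{without} the planar fading assumption $\alpha > 2$ (here $\zeta > A'$). A naive annulus bound over the $\zeta$-separated receivers of a single guard group diverges once $\zeta \le A'$, because arbitrarily short links may pack near $s_v$ while each contributes only a little, so convergence cannot come from separation and doubling alone. The resolution must route entirely through the guards: the short-link reference switch has to convert the receiver-referenced out-affectance into a sender- or guard-referenced quantity in which the number of essentially distinct contributions is controlled by the independence dimension $\inddim$ rather than by a divergent packing sum. A secondary technical point is that the quasi-metric $d$ need not be symmetric, so the guarding inequality $d(r_w,y) \le d(s_v,r_w)$ and the subsequent reference switch must be applied with care to the correct ordered decays $f_{vw}$ and $f_{wv}$.
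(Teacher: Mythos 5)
Your outer scaffolding matches the paper: Lemma~\ref{lem:planar_separation} plus pigeonhole to extract a $\zeta$-separated subset of size $\Omega(|S|/\zeta^{2A'})$, then a guard-based grouping into at most $\inddim$ classes with a constant-overhead reference switch, since $(1+1/\zeta)^\zeta \le e$. But the heart of the proof --- the per-group $O(1)$ bound --- is missing, and you concede as much: you correctly observe that $\zeta$-separation plus doubling cannot make the group sums converge when $\zeta \le A'$, and that ``the resolution must route entirely through the guards,'' yet you never supply the mechanism. The paper's mechanism is this: after separating, it prunes a \emph{second} time by averaging over the feasibility constraint --- since $\sum_{l_v \in \hat S} a_v(\hat S) = \sum_{l_v \in \hat S} a_{\hat S}(v) \le |\hat S|$ (total out-affectance equals total in-affectance), at least half the links of the separated set $\hat S$ satisfy $a_v(\hat S)\le 2$; call this half $S'$. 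Then, for a given $l_v \in L$, the guards are chosen \emph{from among the senders of $S'$ itself}, so each guard $g_i$ is a link whose own out-affectance is already bounded by $2$. The switch $f_{g_i x} \le e^2 f_{vx}$ then yields $a_v(S_i) \le a_v(g_i) + e^2\, a_{g_i}(S_i) \le 1 + 2e^2$: the convergence is supplied by the guard's pre-bounded out-affectance, not by any packing, annulus, or far-field estimate. Without the averaging selection and without anchoring the guards inside $S'$, there is nothing to transfer the group sums to, and your claimed bound $a_v(S_1) = O(\inddim)$ remains unproved --- this is a genuine gap, not a detail.

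Two secondary problems. First, your closing subsampling/colouring step is both unnecessary in the paper's scheme and unsound as stated: the paper's $S'$ directly witnesses amicability with $h(\zeta)=O(\inddim\,\zeta^{2A'})$ and constant $c=(1+2e^2)\inddim$ (legitimate since $\inddim$ is a bounded dimension of the space), so no further reduction from $O(\inddim)$ to $O(1)$ affectance is performed. Your alternative bookkeeping (losing $\inddim$ in the size instead) would be consistent with the theorem, but amicability requires $a_v(S')\le c$ for \emph{every} $l_v \in L$, not merely for links of $S'$; a random $\Theta(1/\inddim)$-subsample controls only the expectation of each $a_v(S')$, and you would need simultaneous concentration over all $v$, while the ``conflict graph of degree $O(\inddim)$'' is never defined or justified. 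Second, you apply the guarding property to receivers ($z=r_w$), whereas the paper applies it to the senders $z=s_x$ and then uses $\zeta$-separation twice (once to get $d_{xx}\le d_{vx}/\zeta$, once for the guard-to-receiver hop $d_{g_i x} \le (1+1/\zeta)\,d(s_{g_i},s_x)$); a receiver-based grouping would have to redo both hops while minding the asymmetry of $f$, as you yourself flag.
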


\begin{proof}
Let $S \subseteq L$ be any feasible subset of $L$.
By Lemma \ref{lem:planar_separation}, there is a subset
$\hat{S} \subseteq S$ of size $\Omega(|S|/\zeta^{2A'})$ that is 
$\zeta$-separated.
Let $S' = \{l_v : a_v(\hat{S}) \le 2 \}$ be the subset of links in
$\hat{S}$ with low out-affectance.
Note that $\sum_{l_v \in \hat{S}} a_v(\hat{S}) = 
\sum_{l_v \in \hat{S}} a_{\hat{S}}(v) \le |\hat{S}|$, by feasibility,
so the average
out-affectance of links in $\hat{S}$ is at most 1, and at least half
the links will have at most double the out-affectance.
Thus, 
 \[ |S'| \ge |\hat{S}|/2 = \Omega(|S|/\zeta^{2A'})\ . \]

Consider any link $l_v \in L$.  Let $J_v = \{g_1, g_2, \ldots, g_t\} $
be the indices of senders in $|S'|$ that guard the sender $s_v$ of
$l_v$, where $t \le \inddim$. Partition $S'$ into sets 
$S_1, S_2, \ldots, S_t$, where $s_{g_i}$ is contained in $S_i$ and guards
$s_v$ from the senders of other links in $S_i$.
Consider any set $S_i$ and let $l_x$ be a link in $S_i$.
Since $s_{g_i}$ guards $s_v$ from $s_x$, $d(s_{g_i},s_x) \le d(s_v, s_x)$.
Then, additionally using the triangular inequality and that $S_i$ is $\zeta$-separated,
\[ d(s_{g_i},s_x) \le d(s_v,s_x) \le d_{vx} + d_{xx} \le (1 + 1/\zeta)d_{vx}\ . \]
So, 
$f(s_{g_i},s_x) = d(s_{g_i},s_x)^\zeta 
  \le (1 + 1/\zeta)^\zeta f_{vx}  \le e \cdot f_{vx}$.
In a similar way, we obtain that
  $d_{g_i x} \le d(s_{g_i},s_x)  + d_{xx} \le (1 + 1/\zeta)d(s_{g_i},s_x)$,
so 
 \[ f_{g_i x} \le (1 + 1/\zeta)^\zeta f(s_{g_i},s_x) \le e \cdot f(s_{g_i},s_x)\ . \]
Combining, we get that 
$f_{g_i x} \le e \cdot f(s_{g_i},s_x) \le e^2 f_{v x}$.
We can then bound the out-affectance of $l_v$ on $S_i$ by
\[ a_v(S_i) = \sum_{l_x \in S_i} a_v(x)
  = \sum_{l_x \in S_i} c_x \cdot \frac{f_{xx}}{f_{vx}}
  \le a_v(g_i) 
    + \sum_{l_x \in S_i \setminus\{l_{g_i}\}} c_x \cdot \frac{e^2 \cdot f_{xx}}{f_{g_i x}}
  = 1 + e^2 \cdot a_{g_i}(S_i) \le 1 + 2e^2\ , \]
using the definition of $S'$ in the last inequality.
Then, $a_v(S') \le (1+2e^2) \inddim$.
Then, $L$ satisfies the definition of amicability with $h(\zeta) =
O(D \zeta^{2A'})$ and $c = (1+2e^2) \inddim$.
\end{proof}

We arrive at the main result of this section, whose proof is given in the appendix. Algorithm \ref{alg:capfixtri} combines the characteristics of the capacity algorithms of \cite{GHWW09} and \cite{SODA11}.

\begin{algorithm}[h]
\caption{Capacity for uniform power in bounded-growth decay spaces.}\label{alg:capfixtri}
\begin{algorithmic}
\STATE Let $L$ be a set of links using uniform power and let $X \leftarrow \emptyset$
\FOR {$l_v \in L$ in order of increasing $f_{vv}$ value}
\IF {$l_v$ is $\zeta/2$-separated from $X$ and $a_v(X) + a_X(v) \leq 1/2$} \label{alg:tri1/2}
\STATE $X \leftarrow X \cup \{l_v\}$
\ENDIF
\ENDFOR
\STATE Return $S \leftarrow \{l_v \in X| a_X(v) \leq 1\}$
\end{algorithmic}
\end{algorithm}

\begin{theorem}
  Uniform power {\capacity} $\zeta^{O(1)}$-approximable in
  bounded-growth decay spaces (by Algorithm \ref{alg:capfixtri}).
  In particular, it is $O(\alpha^4)$-approximable on the plane, for any $\alpha$.
\label{thm:cap-poly-bndgwth}
\end{theorem}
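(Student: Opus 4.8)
The plan is to establish two things: that the returned set $S$ is feasible, and that $|S| = \Omega(|O|/(\inddim\,\zeta^{2A'}))$, where $O$ is an optimal feasible set. Feasibility is immediate: since $S \subseteq X$, for every $l_v \in S$ we have $a_S(v) \le a_X(v) \le 1$ by the defining filter of $S$. For the size bound I would first argue, by a standard averaging argument over the insertion order, that the final filter loses only a constant factor, i.e. $|S| = \Omega(|X|)$: when $l_v$ is inserted its in-affectance from the current set obeys the acceptance budget, and the total ``future'' in-affectance $\sum_v a_{X_{>v}}(v)$ equals $\sum_w a_w(X_{<w})$, which the out-affectance part of the budget bounds by a fraction of $|X|$; with the budget and return-threshold constants chosen appropriately this keeps a constant fraction of $X$ inside $S$. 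It then suffices to compare $|X|$ with $|O|$.

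To compare, I would not use $O$ directly but a structured surrogate. By Lemma~\ref{lem:planar_separation} I partition $O$ into $O(\zeta^{2A'})$ $\zeta$-separated sets and keep the largest, $\hat O$, so that $|\hat O| = \Omega(|O|/\zeta^{2A'})$ and $\hat O$ is still feasible. Pruning to $\tilde O = \{l_v \in \hat O : a_v(\hat O) \le 2\}$ retains at least half of $\hat O$ (since $\sum_v a_v(\hat O) = \sum_v a_{\hat O}(v) \le |\hat O|$ by feasibility), giving $|\tilde O| = \Omega(|O|/\zeta^{2A'})$; crucially $\tilde O$ is simultaneously $\zeta$-separated, feasible, and of bounded out-affectance ($a_v(\tilde O) \le 2$). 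I then charge each link of $\tilde O$ to the greedy decision that excluded it, links of $\tilde O \cap X$ costing nothing. A link $l_v \in \tilde O \setminus X$ was rejected either because it was not $\zeta/2$-separated from the current $X$ (case R1) or because $a_v(X_{<v}) + a_{X_{<v}}(v)$ exceeded the budget (case R2).

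For the R2 links, their total rejection-affectance is, up to the budget constant, at most $\sum_{l_v \in \tilde O}(a_v(X) + a_X(v)) = \sum_{l_w \in X}\big(a_{\tilde O}(w) + a_w(\tilde O)\big)$. Here the gain over general metrics appears: because $\tilde O$ is $\zeta$-separated, interchanging a link's sender and receiver costs only the factor $e$ (the $(1+1/\zeta)^\zeta \le e$ estimate from the proof of Theorem~\ref{thm:indep}), so the guarding argument of Theorem~\ref{thm:indep} bounds $a_w(\tilde O) = O(\inddim)$ for every $l_w$, using that the at most $\inddim$ guards have bounded out-affectance in $\tilde O$. A symmetric guarding argument on the receiver $r_w$ bounds the in-affectance $a_{\tilde O}(w) = O(\inddim)$, this time using feasibility of $\tilde O$ to bound the guards' in-affectance. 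Summing, the number of R2 links is $O(\inddim)\,|X|$.

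The main obstacle is R1. Here I would show that each greedy link $l_w$ blocks only $O(1)$ links of $\tilde O$. The key observation is that any two links $l_v, l_{v'} \in \tilde O$ both blocked by the same $l_w$ must have nearly equal length: combining $d(l_v,l_w) < (\zeta/2)d_{vv}$, $d(l_{v'},l_w) < (\zeta/2)d_{v'v'}$ and $d_{ww} \le \min(d_{vv},d_{v'v'})$ through the (link-)triangle inequality gives $d(l_v,l_{v'}) < (\zeta/2)d_{vv} + d_{ww} + (\zeta/2)d_{v'v'}$, while $\zeta$-separation forces $d(l_v,l_{v'}) \ge \zeta\max(d_{vv},d_{v'v'})$; these are compatible only when the two lengths agree to within a factor $1 + 2/\zeta$. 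Hence all links blocked by $l_w$ share a single length scale $\ell$, lie within distance $O(\zeta\ell)$ of $l_w$, and are mutually $\Omega(\zeta\ell)$-separated, so a doubling packing with radius-to-separation ratio $O(1)$ bounds their number by a constant depending only on $A'$. Therefore the R1 links total $O(|X|)$, and with R2 this yields $|\tilde O| = O(\inddim)\,|X|$, i.e. $|X| = \Omega(|\tilde O|/\inddim)$. Chaining, $|S| = \Omega(|X|) = \Omega(|\tilde O|/\inddim) = \Omega(|O|/(\inddim\,\zeta^{2A'}))$, a $\zeta^{O(1)}$-approximation for constant $\inddim$ and $A'$; on the plane $\inddim \le 6$ (six sectors), $A' = 2$, and $\zeta=\alpha$, giving the $O(\alpha^4)$ bound. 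I expect the R1 near-equal-length packing and the symmetric in-affectance guarding to be the two steps demanding the most care.
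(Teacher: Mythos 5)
Your architecture matches the paper's quite closely: the feasibility observation, the averaging argument giving $|S|=\Omega(|X|)$, and the surrogate $\tilde{O}$ (partition via Lemma~\ref{lem:planar_separation}, keep the largest $\zeta$-separated part, prune to out-affectance at most $2$) are exactly how the paper proceeds, the surrogate being what Theorem~\ref{thm:indep} produces as $OPT'$. Your R1 step is a sound variant: the paper claims each $l_w \in X$ blocks at most one surrogate link, while your near-equal-length computation (lengths within factor $1+2/\zeta$, hence a packing of $\Omega(\zeta\ell)$-separated points in a ball of radius $O(\zeta\ell)$) gives $O(1)$ per blocker via doubling; this works, and is if anything more robust than the paper's one-per-blocker assertion.

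The genuine gap is in R2. After swapping summations you must control $\sum_{l_w \in X} a_{\tilde{O}}(w)$, the in-affectance on the algorithm's links from the surrogate, and your claimed symmetric guarding bound $a_{\tilde{O}}(w)=O(\inddim)$ is false. The guard transfer in Theorem~\ref{thm:indep} works for out-affectance because $a_v(x) = c_x f_{xx}/f_{vx}$ has numerator depending only on the affected link $l_x$: replacing the source $s_v$ by the guard $s_{g_i}$ perturbs only the denominator, by a factor $e^2$. For in-affectance $a_x(w)=\min\left(1, c_w f_{ww}/f_{xw}\right)$ the roles reverse: comparing against $a_x(g_i)$ incurs the ratio $c_w f_{ww}/(c_{g_i} f_{g_i g_i})$, which is unbounded, and feasibility of $\tilde{O}$ bounding the guards' in-affectance does not compensate. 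Concretely, take $l_w$ of quasi-length $n$ and many unit-length links of $\tilde{O}$, mutually $\zeta$-separated and jointly feasible (a grid in the plane with $\alpha>2$ suffices), with all senders within quasi-distance $n/2$ of $r_w$: each contributes capped affectance $\approx 1$ to $l_w$, so $a_{\tilde{O}}(w)$ grows with their number. The paper eliminates this term entirely by exploiting the processing order: a rejected link $l_v \in Z_2$ passed the $\zeta/2$-separation test against $X_{<v}$, and every $l_u \in X_{<v}$ satisfies $f_{uu}\le f_{vv}$ and $c_u \le c_v$, whence $d_{uv} \le (1+4/\zeta)d_{vu}$ and $a_v(u) \le e^4\, a_u(v)$. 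This converts the whole rejection budget $a_v(X_{<v}) + a_{X_{<v}}(v) \ge 1/2$ into the single statement $a_X(v) = \Omega(1)$, after which only the amicability direction is needed: $\sum_{l_v \in Z_2} a_X(v) = \sum_{l_w \in X} a_w(Z_2) = O(\inddim)\,|X|$. Replacing your false in-affectance guarding step with this order-plus-separation conversion repairs the proof and makes it essentially the paper's.
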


This is actually the first SINR approximation result (for capacity or related problems) that is sub-exponential in $\alpha$.

\subsection{Inapproximability results for a variant of metricity}
\label{sec:expon-phi}

\mypara{Metricity variant  $\boldsymbol{\varphi}$} 
Alternative measures of the metric-like behavior of a space
$\calD=(V,f)$ can
be concocted. A particularly natural one is the 
parameter $\varphi$ that bounds the \emph{multiplicative} factor within which 
$f$ satisfies a relaxed triangular inequality:
\[ \varphi = \max_{x,y,z \in V} \frac{f_{xy} + f_{yz}}{f_{xz}}\ . \]
So, $\varphi$ is the smallest value such that $f_{xz}\le
\varphi(f_{xy}+f_{yz})$, for every $x,y,z\in V$.
For comparison with $\zeta$, we define $\phi = \lg \varphi$.

Examining the proofs of the various results for {\capacity} 
and \emph{inductive independence} \cite{hoeferspaa}, we find that the triangular
inequality is applied to compare lengths that are within constant
factor of each other, in which case the overhead is comparable to the case of $\zeta$. Thus, the results hold also in terms of $\phi$.

\begin{observation} {\capacity}, both with monotone power
  \cite{SODA11,us:ICDCS14} and arbitrary power control
  \cite{KesselheimSODA11}, is approximable within $2^{O(\phi)}$.
  Other results with effective (exponential) approximations in terms
  of similar bounds hold for inductive independence
  \cite{hoeferspaa,us:SODA13} and relationships between power control
  and monotone power \cite{us:SODA13}.
\end{observation}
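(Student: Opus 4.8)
The plan is to reopen the proofs of the cited results rather than merely invoke Proposition~\ref{prop:metric}: the latter, applied through the induced quasi-metric, only delivers the weaker factor $2^{O(\zeta)}$, whereas we want the (smaller) factor $2^{O(\phi)}=\varphi^{O(1)}$. The guiding observation is that a single application of the triangular inequality to two decays that are within a constant factor of each other is billed differently in the two formulations: in the $\zeta$-formulation one compares quasi-distances, $d_{xz}\le d_{xy}+d_{yz}\le 2\max(d_{xy},d_{yz})$, and raises to the power $\zeta$, costing a factor exponential in $\zeta$ (for instance the refined $3^\zeta$ of \cite{us:ICDCS14}); in the $\varphi$-formulation the very same comparison costs only $f_{xz}\le \varphi(f_{xy}+f_{yz})\le 2\varphi\max(f_{xy},f_{yz})$, i.e.\ a factor $\Theta(\varphi)=2^{\Theta(\phi)}$. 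The two are consistent because $\varphi\le 2^{\zeta-1}$ for every decay space (from $f_{xz}=d_{xz}^{\zeta}\le(d_{xy}+d_{yz})^{\zeta}\le 2^{\zeta-1}(f_{xy}+f_{yz})$), so $\phi\le\zeta$ and the $\varphi$-bound is a genuine refinement. Thus the whole argument amounts to substituting the quantity $\varphi$ for the factor $2^{\zeta}$ wherever the triangular inequality is invoked.

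First I would isolate the single primitive through which metricity enters these proofs, namely \emph{reference switching}: replacing the decay $f(s_x,r_v)$ from an interferer to a receiver by the decay $f(s_x,s_v)$ to the corresponding sender (and the symmetric move). Applying the relaxed inequality to the triple $(s_x,r_v,s_v)$ gives $f(s_x,s_v)\le \varphi\bigl(f(s_x,r_v)+f_{vv}\bigr)$, so whenever the proof's ordering or separation guarantees that the link length is small relative to the inter-link decay, say $f_{vv}\le \tfrac{1}{2\varphi}f(s_x,s_v)$, we obtain $f(s_x,r_v)\ge \tfrac{1}{2\varphi}f(s_x,s_v)$. Hence one switch perturbs every affectance term $c_v f_{vv}/f_{xv}$ by at most a factor $O(\varphi)=2^{O(\phi)}$, exactly as the $\zeta$-version perturbs it by a factor exponential in $\zeta$. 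The identical computation bounds the ratio of affectances between two links of comparable length. The symmetry hypotheses are carried over verbatim as in Proposition~\ref{prop:metric}.

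Next I would walk through the three families. For monotone-power {\capacity} \cite{SODA11,us:ICDCS14} the analysis charges each link of an optimal solution to a selected link and, for each charge, uses only $O(1)$ reference switches to compare the two links' affectances; the entire metricity-dependence is a product of $O(1)$ primitive applications, which becomes $\varphi^{O(1)}=2^{O(\phi)}$. The arbitrary-power-control result \cite{KesselheimSODA11} and the inductive-independence bounds of \cite{hoeferspaa} share this skeleton: a charging or inductive-ordering argument whose only metric input is the comparison of near-equal-length decays. Substituting the primitive in each case yields the stated $2^{O(\phi)}$ factors.

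The hard part is the verification itself. One must certify that \emph{every} invocation of the triangular inequality in these structurally different proofs really is a comparison of near-equal-length decays, and --- crucially --- that none of the exponential-in-$\zeta$ bounds is secretly produced by a packing or dimension argument. This matters because, unlike a genuine metric manipulation, the multiplicative relaxation carries no control on the growth of the space: in a general decay space one can place unboundedly many mutually separated points at essentially the same scale, so any step that tacitly counted points via a volume or packing bound would fail to transfer to $\phi$. The crux is therefore to confirm that the whole $\zeta$-dependence of each cited proof is funnelled through the triangular-inequality slack (the charging and length-hierarchy structure), in which case the substitution $2^{\zeta}\mapsto\varphi$ is sound; for the power-control case one additionally checks that the near-equal-length regime holds for the power assignment actually used.
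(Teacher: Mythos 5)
Your proposal is correct and takes essentially the same approach as the paper, whose entire justification is the proof-inspection remark preceding the observation: in the cited analyses the triangular inequality is invoked only to compare decays within constant factors of each other, and there the relaxed inequality $f_{xz}\le\varphi(f_{xy}+f_{yz})$ costs a factor $O(\varphi)=2^{O(\phi)}$ per application exactly where the $\zeta$-version costs $2^{O(\zeta)}$. Your elaborations (the reference-switching primitive, the comparison $\varphi\le 2^{\zeta-1}$ so $\phi\le\zeta$, and the caveat that none of the $\zeta$-dependence may stem from packing or dimension arguments, which the multiplicative relaxation cannot support) are sound refinements of what the paper asserts in a single sentence.
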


Bounds on inductive independence also have numerous implications,
including connectivity and aggregation \cite{SODA12,PODC12}, spectrum
auctions \cite{hoeferspaa,HoeferK12}, dynamic packet scheduling
\cite{sirocco12,kesselheimStability}, and distributed scheduling
\cite{KV10,icalp11}.

We can observe that $\zeta \le \phi$. Namely, for any nodes $x, y, z$,
$f_{xz}^{1/\zeta} \le f_{xy}^{1/\zeta} + f_{yz}^{1/\zeta} 
  \le 2 \max(f_{xy}^{1/\zeta},f_{yz}^{1/\zeta}) 
  = 2 (\max(f_{xy},f_{yz}))^{1/\zeta} \le (f_{xy}+f_{yz})^{1/\zeta}$, 
using the definition of $\zeta$.
Thus, $f_{uv} \le 2^{\zeta}(f_{uw} +f_{wv})$.
Hence, lower bounds in terms of $\zeta$
carry over to lower bounds in terms of $\phi = \log \varphi$,
so exponential approximations in terms of $\phi$ are best
possible in general metrics.

A converse relation between $\zeta$ and $\phi$ does not exist, however.
Consider the instance on three points $V=\{a,b,c\}$ with 
$f_{ab}=1$, $f_{bc}=q$ and $f_{ac} = 2q$. 
Then, one can verify that $\phi \le 2$, while 
$\zeta = \theta(\log q/\log\log q)$, which is unbounded.

We find that {\capacity} in
bounded-growth spaces is still exponentially hard in terms of $\phi$.
We give a construction that is embedded on a pair of lines,
that holds for arbitrary values of a parameter $\alpha$.
For decays within the lines, it uses the usual distance function raised to power $\alpha$,
while between the lines, it uses two fixed decays: $n^\alpha$ and $n^{\alpha+1}$.
It then also shows that strong hardness holds even when none of the decay
functions are particularly fast growing. 
The proof is deferred to the appendix.

\begin{theorem}[\cite{GHWW09}]
\prob{Capacity} of equi-decay links in bounded-growth decay spaces
is hard to approximate within $2^{\phi(1-o(1)}$-factor.
This holds even if the algorithm is allowed 
arbitrary power control against an adversary that uses uniform power.
\label{thm:hardness}
\end{theorem}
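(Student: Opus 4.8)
The plan is to adapt the general-metric hardness construction of \cite{GHWW09} underlying Theorem~\ref{thm:cap-hardness}, re-engineering its geometry so that the resulting decay space is bounded-growth while the YES/NO gap is governed by $\phi$ rather than $\zeta$. Following the stated construction, I would realize the hard instance on two parallel lines: all link endpoints lie on these lines, within-line decays are the Euclidean distance raised to the power $\alpha$, and every between-line pair receives one of exactly two decay values, $n^\alpha$ and $n^{\alpha+1}$, playing the roles of ``strong'' and ``negligible'' cross-line interference respectively. I would give each link a common within-line length so that all links are equi-decay, and encode the (bipartite) conflict structure of the source instance into the binary pattern of between-line decays. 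The reduction should preserve the combinatorial gap: in the YES case a set of $\Omega(N)$ links is simultaneously feasible under uniform power, whereas in the NO case every feasible set has size $N^{o(1)}$, even if the algorithm may choose powers arbitrarily.

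First I would verify that the space is bounded-growth. Since all points lie on a pair of lines, the quasi-distance metric is essentially one-dimensional, so its doubling dimension is $O(1)$; likewise, independence in the sense of $B(z,f_{zx})\cap I=\{x\}$ on two lines is $O(1)$-bounded, so the independence dimension $\inddim$ is constant. This is precisely the regime in which Theorem~\ref{thm:cap-poly-bndgwth} yields a $\zeta^{O(1)}$ approximation, which is exactly why the hardness must be phrased through the variant parameter. Next I would compute $\varphi = \max_{x,y,z}(f_{xy}+f_{yz})/f_{xz}$: the worst triangle combines a short within-line pair in the denominator with between-line decays of magnitude $n^{\alpha}$ or $n^{\alpha+1}$ in the numerator, so I expect $\varphi$ to be polynomial in $n$ (with exponent depending on $\alpha$) and hence $\phi=\lg\varphi = \Theta(\lg n)$. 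Fixing the gadget size $N$ as a function of $n$ then aligns the reduction's $N^{1-o(1)}$ gap with the target factor $2^{\phi(1-o(1))}$, and since the decays are only polynomial in $n$ this also establishes that the hardness persists even though no decay function grows quickly.

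The heart of the argument is the gap analysis together with robustness to power control. In the YES case I would exhibit the large feasible set directly under uniform power, using that the between-line interference within that set is governed by the larger decay $n^{\alpha+1}$ and is therefore negligible, while the within-line placement is conflict-free. In the NO case I would show that no power assignment produces a feasible subset of size more than $N^{o(1)}$: the relevant affectance is driven by the smaller between-line decay $n^\alpha$, and a charging argument across the two lines forces some receiver's in-affectance above $1$ for any large subset regardless of the chosen powers. Equating the resulting inapproximability factor with $2^{\phi(1-o(1))}$ via $\phi=\Theta(\lg n)$ then completes the proof.

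The main obstacle I anticipate is reconciling two competing demands: encoding the full conflict structure of a hard instance into only two inter-line decay values and a one-dimensional within-line geometry (while keeping all links equi-decay), and pinning $\varphi$ down precisely enough that the gap reads as $2^{\phi(1-o(1))}$ rather than a weaker function of $\phi$. The power-control robustness is the most delicate piece, since the algorithm's freedom to scale powers could in principle exploit the asymmetry between the two between-line decays; ruling this out requires showing that the adversary's uniform power already realizes the worst case for the NO instances, so that arbitrary power buys the algorithm nothing.
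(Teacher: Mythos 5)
Your skeleton matches the paper's construction (two parallel lines, within-line geometric decays, two cross-line decay values separated by a factor of $n$, reduction from an independent-set gap, $\phi=\Theta(\lg n)$), but as described it would fail in two concrete places. First, the geometry: you give each link a ``common within-line length'' and speak of encoding a \emph{bipartite} conflict structure into the cross-line decays. In the paper every link \emph{spans} the gap --- sender $s_i$ at $(0,i)$ on one line, receiver $r_i$ at $(n,i)$ on the other --- so that every sender-to-receiver decay $f(s_i,r_j)$, for every ordered pair including $i=j$, is a cross-line value and hence freely assignable, while the within-line (sender--sender, receiver--receiver) decays never enter the SINR. That is exactly what lets the construction encode an \emph{arbitrary} graph $G$ while keeping all links equi-decay ($f_{ii}=n^{\alpha'}$). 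With links lying within a single line, same-line pairs interfere geometrically regardless of $G$, and if you literally restrict to bipartite conflict graphs the reduction is vacuous, since \prob{Max Independent Set} on bipartite graphs is polynomial-time solvable.

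Second, and most seriously, the power-control robustness, which you flag as delicate but resolve only by hoping that ``the adversary's uniform power already realizes the worst case.'' That is neither provable in your setting nor how the argument goes. The paper sets the edge decay to $n^{\alpha'}-\delta$, strictly \emph{below} the link decay, precisely so that for any power assignment $\calP$ and any edge pair, $a_v^{\calP}(w)\cdot a_w^{\calP}(v) \ge \beta^2 \, f_{vv}f_{ww}/(f_{vw}f_{wv}) = \beta^2 n^{2\alpha'}/(n^{\alpha'}-\delta)^2 > 1$, forcing one of the two affectances above $1$; this pairwise product argument kills every conflicting pair under \emph{every} power assignment, with no charging argument over large subsets needed. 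With your choice of strong cross decay exactly $n^{\alpha}$, equal to the link decay, the product is exactly $\beta^2$, both affectances can equal $1$, and two conflicting links \emph{are} simultaneously feasible --- the NO case collapses. Finally, your estimate that $\varphi$ is ``polynomial in $n$ with exponent depending on $\alpha$'' is too lossy: if $\varphi = n^{c}$ with $c>1$, the reduction yields only $2^{\phi(1-o(1))/c}$-hardness (and hardness beyond $n^{1-o(1)}$ is impossible anyway, since an $n$-approximation is trivial). The paper needs and proves $\varphi = O(n)$, hence $\phi = \lg n + O(1)$, which it achieves by making the only large jump the single factor of $n$ between $n^{\alpha'}$ and $n^{\alpha'+1}$ while within-line decays stay tame; you would need the analogous tight bound for the stated $2^{\phi(1-o(1))}$ factor to come out.
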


We note that the decays used in the construction were all 
in the range $d^{\alpha'}$ and $d^{\alpha'+1}$ between pairs of distance $d$.
%
This result thus shows that huge decays (or, path loss) are not
needed \emph{per se} to get large approximation hardness. Rather, it is the
differences in decay among spatially related points that is the
cause. 

\newpage

\bibliographystyle{abbrv}
\bibliography{references}

\appendix

\section{Missing proof from Section \ref{sec:dependence}}

\noindent \textbf{Theorem \ref{thm:cap-hardness}.} \emph{
 \prob{Capacity} of equi-decay links is hard to approximate within $2^{\zeta(1-o(1))}$ factor.
This holds even if the algorithm is allowed 
arbitrary power control against an adversary that uses uniform power.
}

\begin{proof}
Given a graph $G=(V,E)$, form a set $L$ of links of unit-decay with a
link $l_i$ for each node $v_i$ and with the (bi-directional) decay of
$f_{ij} = f_{ji}$ as 2 if $v_iv_j \in E$ and $1/n$ if $v_iv_j \not\in
E$.

If $S$ is a feasible set of links in $L$, then it contains no two
links $l_i$ and $l_j$ that form an edge in $E$, no matter what
power they assume.  Similarly, if $I$ is an independent set in $G$,
then if $S_I$ is the corresponding set of links, the affectance of any
given link $l_i$ in $S_I$ when using uniform power is at most $(n-1) \cdot
1/n < 1$; thus, $S_I$ is feasible.  Hence, there is a one-one
correspondence between independent sets in $G$ and feasible sets in
$L$, as well as between sets that are feasible and those that are feasible
under uniform power.

Now, observe that $\zeta \le \lg n$, as $n$ is the maximum ratio
between decays, and the bound is actually tight.  The
$n^{1-o(1)}$-approximation hardness of \prob{Max Independent Set}
\cite{khot2006better} then translates to $|L|^{1-o(1)} =
2^{\zeta(1-o(1)}$-approximation hardness for {\capacity}.
\end{proof}

\section{Missing proofs from Section \ref{sec:polya}}

We shall make use of the following technique.

\begin{lemma}[Signal-strengthening \cite{HW09}]
There is a polynomial-time algorithm that, for any given $p, q$, 
partitions any $p$-feasible set into 
$\lceil 2q/p \rceil^2$ sets, all $q$-feasible.
%
\label{lem:signal-strength}
\end{lemma}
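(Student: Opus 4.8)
The plan is to give a constructive partition via two greedy passes over the links, each pass relying only on the elementary fact that a minimum is at most an average. Recall that a set $T$ is $K$-feasible exactly when $a_v(T)=\sum_{w\in T}a_v(w)\le 1/K$ for every $l_v\in T$; so $p$-feasibility of $S$ says that each link's affectance onto $S$ is at most $1/p$, and we must repartition $S$ so that each link's affectance onto its own class drops to $1/q$. Fix any total order $l_1\prec\cdots\prec l_m$ of $S$ and set $k=\lceil 2q/p\rceil$. For a link $l_i$ and a set $T$, I split its affectance $a_{l_i}(T)$ into a backward part (affectance onto members of $T$ preceding $l_i$) and a forward part (onto members following $l_i$); the two passes will each drive one of these parts below $1/(2q)$, so that their sum is at most $1/q$.

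Pass~1 controls the backward part. I would process $l_1,\dots,l_m$ in order, maintaining $k$ classes $A_1,\dots,A_k$, and place each $l_i$ into the class $A_j$ minimizing the cost $\sum_{w\in A_j,\ w\prec l_i}a_{l_i}(w)$, i.e.\ the affectance $l_i$ exerts on the already-placed members of $A_j$. Since every earlier link lies in exactly one class, the costs over $j=1,\dots,k$ sum to $\sum_{w\prec l_i}a_{l_i}(w)\le a_{l_i}(S)\le 1/p$, so the minimum cost is at most $(1/p)/k\le 1/(2q)$ by the choice of $k$. Placing $l_i$ into that class makes its backward affectance within the class at most $1/(2q)$, a quantity that never increases afterwards, since later insertions only contribute to forward affectance.

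Pass~2 then refines each $A_i$ to control the forward part, by the identical argument run in reverse order: I process the members of $A_i$ from $l_m$ down to $l_1$ into $k$ sub-classes, each time minimizing the affectance onto the already-placed (hence later) members; averaging again yields a sub-class in which the forward affectance is at most $(1/p)/k\le 1/(2q)$. A final class is an intersection $A_i\cap B_j$, on which the backward affectance is inherited from Pass~1 (it is $\le 1/(2q)$, since restricting to a subset can only decrease it) and the forward affectance is $\le 1/(2q)$ from Pass~2; hence every link has total affectance at most $1/q$ onto its final class, which is therefore $q$-feasible. The number of classes is $k\cdot k=\lceil 2q/p\rceil^2$, and both passes clearly run in polynomial time. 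The main obstacle, and the reason one gets the square rather than a single factor $\lceil 2q/p\rceil$, is that a one-directional greedy can only see already-placed links and thus cannot bound the affectance contributed by links inserted later; the remedy is exactly the second reverse pass, together with the observation that splitting the budget $1/q$ evenly between the backward and forward halves lets each pass succeed with the threshold $1/(2q)$. The same proof applies verbatim to the in-affectance formulation, replacing $a_{l_i}(w)$ by $a_w(l_i)$ throughout.
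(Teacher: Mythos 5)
Your proposal is correct, and it is essentially the original argument of \cite{HW09}, which the paper imports without proof: a two-directional greedy with $k=\lceil 2q/p\rceil$ classes per pass, where the forward pass bounds each link's affectance involving earlier-placed links and the reverse pass bounds the part involving later ones, each via the minimum-at-most-average step that yields the $1/(pk)\le 1/(2q)$ budget per half (the original orders links by length, an inessential difference since any fixed total order works, and your closing remark correctly covers both the in- and out-affectance formulations of $K$-feasibility). No gaps.
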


We first argue that feasible sets under uniform power must be somewhat separated (or, $1/\zeta$-separated), independent of metric.

\begin{lemma}
Let $S$ be an $e^2/\beta$-feasible set of links under uniform power and assume
$\zeta \ge 1$.
Then, $S$ is $1/\zeta$-separated.
\label{lem:onezetasep}
\end{lemma}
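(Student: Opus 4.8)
The plan is to read the separation straight off the affectance bound that $e^2/\beta$-feasibility supplies, and then transport it to the four node-to-node quasi-distances defining a link distance by means of the triangle inequality in the induced quasi-metric.

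First I would unpack the hypothesis. For every $l_v \in S$ we have $a_S(v) \le \beta/e^2$, so in particular $a_w(v) \le \beta/e^2$ for each other link $l_w \in S$, which is below $1$ since $\beta < e^2$. Being under the truncation level, the affectance equals its untruncated value $a_w(v) = c_v f_{vv}/f_{wv}$, and since $c_v > \beta$ this rearranges to $f_{wv} > e^2 f_{vv}$. Raising to the power $1/\zeta$ gives the clean cross-distance bound $d(s_w,r_v) > e^{2/\zeta} d_{vv}$, and the same argument with the roles of $v$ and $w$ exchanged yields $d(s_v,r_w) > e^{2/\zeta} d_{ww}$. These two inequalities, one per direction of feasibility, are the workhorses of the proof.

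Now fix a pair and label it by the order $\prec$ so that $d_{vv} \le d_{ww}$; I must show $d_{vw} \ge d_{vv}/\zeta$ and $d_{wv} \ge d_{ww}/\zeta$. Each is a minimum of four terms. The two cross terms are handled directly by the workhorse bounds. For the sender--sender and receiver--receiver terms I would peel off a single link length with the triangle inequality, e.g.\ $d(s_v,s_w) \ge d(s_v,r_w) - d_{ww} > (e^{2/\zeta}-1)d_{ww}$ and $d(r_v,r_w) \ge d(s_v,r_w) - d_{vv} > e^{2/\zeta}d_{ww} - d_{vv}$. Using $1+x \le e^x$ in the form $e^{2/\zeta}-1 \ge 2/\zeta$, together with $\zeta \ge 1$ and $d_{vv} \le d_{ww}$, every one of these is at least $d_{ww}/\zeta$ (hence also at least $d_{vv}/\zeta$), so all terms except one fall into place.

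The single delicate term is $d(s_w,r_v)$ when bounding $d_{wv}$ for the longer link: the workhorse relates it only to the shorter length, $d(s_w,r_v) > e^{2/\zeta} d_{vv}$, which need not reach $d_{ww}/\zeta$ when $d_{ww} \gg d_{vv}$, and this is the main obstacle. The fix is to bring in feasibility in the other direction: chaining $d(s_v,r_w) \le d(s_v,r_v) + d(r_v,s_w) + d(s_w,r_w)$ with $d(s_v,r_w) > e^{2/\zeta}d_{ww}$ yields a second bound $d(s_w,r_v) > (e^{2/\zeta}-1)d_{ww} - d_{vv}$. A one-line case split on whether $d_{ww}$ exceeds $d_{vv}/\bigl((e^{2/\zeta}-1)-1/\zeta\bigr)$ then shows that one of the two bounds always reaches $d_{ww}/\zeta$; checking the two resulting scalar inequalities in $\zeta$ for all $\zeta \ge 1$ is routine, and is precisely where the constant $e^2$ in the hypothesis is calibrated. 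The quasi-metric may be asymmetric, but in the symmetric setting relevant to uniform power this is transparent, and in general one applies each triangle inequality in the direction matching the term being bounded, identifying $d(r_v,s_w)$ with $d(s_w,r_v)$ where symmetry is available.
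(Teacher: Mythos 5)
Your proposal is correct and is essentially the paper's own argument run in the contrapositive direction: the same feasibility-to-decay-gap step (the affectance bound giving $f_{wv} > e^2 f_{vv}$, hence $d(s_w,r_v) > e^{2/\zeta}d_{vv}$), the same triangle-inequality chains through the link endpoints (your Bound B chain $d(s_v,r_w) \le d_{vv} + d(r_v,s_w) + d_{ww}$ is exactly the paper's case-1 chain rearranged), and the same calibration of the constant $e^2$ via $(1+c/\zeta)^\zeta \le e^c$. The only bookkeeping difference is that where the paper dispatches the awkward cross term by signal domination (feasibility with $\beta \ge 1$ forces $d_{ww} \le d(s_v,r_w)$, pinning the violated bound to the shorter link) you use a quantitative case split on the ratio $d_{ww}/d_{vv}$ with your two bounds --- both are valid, and your explicit caveat about quasi-metric asymmetry (identifying $d(r_v,s_w)$ with $d(s_w,r_v)$) matches an identification the paper's proof itself makes silently.
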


\begin{proof}
Suppose otherwise. Then, there are two links $l_v,l_w$ in 
$S$ that are not $1/\zeta$-separated.
There are three cases, depending on which pairwise distance bound is violated.

Consider first the case when $d(s_v,r_w) < (1/\zeta)
\max(d_{vv},d_{ww})$.  Since the two links are feasible
simultaneously, the signal received by $r_w$ from $s_w$ is at least as
strong as that from the other sender $s_v$ (since $\beta \ge 1$).
So, $d_{ww} \le d(s_v,r_w)$, implying that 
$d(s_v,r_w) < (1/\zeta) d_{vv}$.
Then, by the triangular inequality and these bounds, 
\[ d(s_w,r_v) \le d_{ww} + d(r_w,s_v) + d_{vv} \le 2d(s_v,r_w)+d_{vv}
    < (1 + 2/\zeta)d_{vv}\ . \]
Thus, $f_{wv} < (1+2/\zeta)^\zeta f_{vv} \le e^2 f_{vv}$.
It follows that
 \[ a_w(v) = c_v \frac{f_{vv}}{f_{wv}} \ge \frac{c_v}{e^2} > \frac{\beta}{e^2}\ . \]
This contradicts the assumption that $l_v$ and $l_w$ coexist in the same $e^2$-feasible set.

Consider next the case when $d(r_v,r_w) < (1/\zeta) \max(d_{vv},d_{ww})$.
Without loss of generality, assume  $d(r_v,r_w) < d_{vv}/\zeta$.
By the triangular inequality,
$d_{vw} \le d_{vv} + d(r_v,r_w) < d_{vv}(1 + 1/\zeta)$, 
implying that $f_{vw} < (1+1/\zeta)^\zeta f_{vv} \le e \cdot f_{vv}$,
leading to a contradiction as before.
Finally, the case when $d(s_v,s_w) < \max(d_{vv},d_{ww})$ is symmetric to the
previous one when swapping senders and receivers.
Hence, the claim.
\end{proof}

We next show that in doubling metrics, the separation factor can be
expanded by a polynomial factor at the cost of a polynomial factor.

\begin{lemma}
Let $\tau$ and $\eta$ be positive parameters, $\tau < \eta$.
Let $S$ be a $\tau$-separated set of links 
in a decay space whose quasi-distance metric has doubling dimension $A'$.
Then, $S$ can be partitioned into $O((\eta/\tau)^{A'})$ sets each of which 
is $\eta$-separated.
\label{lem:sep-strength}
\end{lemma}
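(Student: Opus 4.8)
The plan is to recast the statement as a bounded graph-colouring problem. I form the \emph{conflict graph} $H$ on $S$ whose vertices are the links, placing an edge between $l_v$ and $l_w$ exactly when they are \emph{not} $\eta$-separated, i.e.\ $d_{vw} < \eta \max(d_{vv},d_{ww})$. A set of links is $\eta$-separated precisely when it is an independent set of $H$, so a proper colouring of $H$ with $O((\eta/\tau)^{A'})$ colours yields the desired partition. Hence it suffices to bound the chromatic number, and I would do so by \emph{first-fit} processing the links in order of increasing length $d_{vv}$, assigning to each the smallest colour not yet used on a conflicting, already-coloured link.

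First I would record a clean packing tool. Since $S$ is $\tau$-separated, for any two links $d_{vw} \ge \tau \max(d_{vv},d_{ww}) \ge \tfrac{\tau}{2}(d_{vv}+d_{ww})$, so the quasi-metric balls $B(s_v,\tfrac{\tau}{2}d_{vv})$ centred at the senders are pairwise disjoint; more generally, the senders of any links whose lengths lie in a fixed scale form a point set whose pairwise (quasi-)distances are $\Omega(\tau)$ times that scale. The doubling (Assouad) dimension $A'$ of the quasi-metric then bounds, through a packing estimate of the form $\calP(B(x,tR),R)\le C t^{A'}$ as used in Theorem~\ref{annulusargument}, the number of such senders lying in a ball of radius $\eta d_{vv}$ by $O((\eta/\tau)^{A'})$.

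With these in hand, the colour count would be controlled as follows. When $l_v$ receives colour $c$, each of the colours $1,\dots,c-1$ must be blocked by some already-coloured (hence shorter) link $l_{w}$ conflicting with $l_v$; by $\tau$-separation such an $l_w$ satisfies $\tau d_{vv}\le d_{vw} < \eta d_{vv}$, so one of its endpoints lies in $B(s_v,\eta d_{vv})\cup B(r_v,\eta d_{vv})$ while staying at distance at least $\tau d_{vv}$ from both $s_v$ and $r_v$. Bucketing by which of the four endpoint pairs realises $d_{vw}$ and choosing one representative link per colour, I reduce to counting representatives in a single ball of radius $\eta d_{vv}$, and the packing estimate would cap this at $O((\eta/\tau)^{A'})$; thus $c = O((\eta/\tau)^{A'})$ and first-fit never exceeds that many colours.

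The step I expect to be the real obstacle is precisely this last packing count, since the chosen representatives need not be mutually separated: a link much shorter than $l_v$ is only $\tau d_{ww}$-separated from its neighbours, so many short links can in principle cluster inside $B(s_v,\eta d_{vv})$. The crux is therefore to argue that these small scales cannot each contribute a fresh colour, by showing that $\tau$-separation \emph{relative to the larger length} forces an exclusion zone around every already-placed link, so that within the ball only $O((\eta/\tau)^{A'})$ colour classes can simultaneously have a representative conflicting with $l_v$. Making this multi-scale packing argument rigorous (rather than settling for a naive single-scale count, which is where a genuine gap could hide) is where the care is needed, and it is exactly the place where the doubling dimension $A'$ enters the exponent.
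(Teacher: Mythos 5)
Your plan --- a conflict graph on the full link distance $d_{vw}$, first-fit coloring in increasing-length order, and a doubling-dimension packing bound on the number of blocked colors --- mirrors the paper's own proof, which builds a graph from the neighborhoods $R_v$ (links whose receivers lie within $\eta d_{vv}$ of $r_v$), orders by length, and bounds the degeneracy by a single packing estimate. (Incidentally, your graph is the more careful one: the paper's graph uses only receiver--receiver distances, so its closing claim that independent sets coincide with $\eta$-separated sets holds in one direction only.) But the step you defer is not a technicality awaiting rigor --- it is a genuine gap that cannot be closed from the stated hypotheses. When $l_v$ is colored, a shorter conflicting link $l_w$ indeed satisfies $\tau d_{vv} \le d_{vw} < \eta d_{vv}$, so its endpoints lie in an annulus at scale $d_{vv}$; but two such links $l_w, l_u$ are only guaranteed $d(l_w,l_u) \ge \tau \max(d_{ww},d_{uu})$, which can be arbitrarily small relative to $\tau d_{vv}$, so no packing estimate at scale $d_{vv}$ applies to them. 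Bucketing into dyadic length classes gives $O((\eta/\tau)^{A'})$ representatives per class but unboundedly many classes, and nothing prevents every class from blocking a fresh color.

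In fact the statement is false as written, so your worry is exactly right. Take the line with $f=d$ (quasi-metric the Euclidean line, so $A'=1$) and links $l_k$ with $s_k = 2\cdot 10^k$, $r_k = 3\cdot 10^k$, hence $d_{kk}=10^k$, for $k=0,\ldots,K$. For $j<k$ all four endpoint distances lie in $[1.7\cdot 10^k,\, 3\cdot 10^k)$, so the set is $\tau$-separated with $\tau=1$, yet no pair is $\eta$-separated with $\eta=3$; the conflict graph is a clique, so every partition into $3$-separated sets needs $K+1$ parts, while the lemma promises $O((\eta/\tau)^{A'})=O(1)$. Any correct version must restrict to links of comparable length (where your single-scale packing does go through) or import additional structure, e.g.\ the feasibility available in the setting of Lemma~\ref{lem:onezetasep}. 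You should also know that the paper's own proof commits precisely the naive single-scale count you warned against: it asserts that the balls of radius $\tau d_{vv}/2$ around the receivers of links in $R_v$ are disjoint, which is justified only for links at least as long as $l_v$; in the example above $|R_K| = K+1$, whereas the claimed bound $C\left((2\eta+1)/\tau\right)^{A'}$ is constant. So the multi-scale obstacle you identified is real, is not fixable within the stated hypotheses, and is present in the paper's argument as well --- your sketch simply leaves explicit the hole that the paper's proof papers over.
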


\begin{proof}
Consider a link $l_v$ in $S$. Let $R_v$ be the set
of links in $S$ whose receivers are within distance $\eta \cdot
d_{vv}$ from $r_v$. Then, we have a set of $|R_v|$
disjoint balls of radius $\tau d_{vv}/2$ that are properly contained
in a ball of radius of $(\eta +\tau/2) d_{vv}$ (around $r_v$).
By the definition of the Assouad dimension, 
\begin{equation}
|R_v| \le C \left(\frac{\eta+\tau/2}{\tau/2}\right)^{A'}
   = C \left((2 \eta+1)/\tau\right)^{A'} \ .
\label{eq:rv}
\end{equation}

We now form the graph $G_{S} = (V,E)$, where $V = S$ and 
$(l_v, l_w) \in E$ iff $l_v \in R_w$ or $l_w \in R_v$.
Let $\rho = \max_{l_v \in S} |R_v| \le C ((2\eta+1)/\tau)^{A'} 
  = O((\eta/\tau)^{A'})$.
Form a total order $\prec$ on the nodes by non-increasing link length.
By (\ref{eq:rv}), each node has at most $\rho$ neighbors that follow it in the ordering (because if $l_v \prec l_w$ then $l_w \in R_v$). That is, $\prec$ is a 
\emph{$\rho$-inductive} (or, \emph{$\rho$-degenerate}) ordering of $G$.
Coloring the graph first-fit according to $\prec$ then uses at most $\rho+1$ colors. 
To complete the proof, we observe that a set of links is $\eta$-separated
if and only if the corresponding set of vertices in the graph is independent 
(graph-theoretically). 
\end{proof}

Put together, we obtain a sparsity-strengthening lemma in doubling spaces.

\noindent \textbf{Lemma \ref{lem:planar_separation}.} \emph{
  Let $S$ be a feasible set of links in a decay space whose
  quasi-distance metric has doubling dimension $A'$.  Then, $S$ can be
  partitioned into $O(\zeta^{2A'})$ sets, all of which are
  $\zeta$-separated.
}

\begin{proof}
Recall that by the signal strengthening Lemma
\ref{lem:signal-strength}, $S$ can be partitioned into at most
$(e^2/\beta+1)^2$ sets each of which is $e^2/\beta$-feasible.  Let
$S'$ be such a set.  By Lemma \ref{lem:onezetasep}, $S'$ is
$1/\zeta$-separated, so by Lemma \ref{lem:sep-strength}, $S'$ can be
partitioned into $O(\zeta^{2A'})$ sets, each of which is
$\zeta$-separated.
\end{proof}

\noindent \textbf{Theorem \ref{thm:cap-poly-bndgwth}.} \emph{
  Uniform power {\capacity} is $\zeta^{O(1)}$-approximable in
  bounded-growth decay spaces.
}

\begin{proof}
We use Algorithm \ref{alg:capfixtri}.

Let $L$ be a set of links and $S$ and $X$ be the sets computed by the
algorithm on input $L$. 
Let $\prec$ denote the order in which the algorithm processes the links.
Note that by rearrangement and the construction of
$X$, $\sum_{l_v \in X} a_X(v) \leq |X|/2$.  Thus, the average
in-affectance of a node is at most $1/2$, and by Markov's inequality
\begin{equation}
|S| \geq 1/2\cdot |X| \ .
\label{eq:s-x}
\end{equation}
Let $OPT$ be a maximum capacity subset of $L$. Let $OPT' \subseteq
OPT$ be the subset of $OPT$ promised by Thm.~\ref{thm:indep} that has
cardinality $\Omega(|OPT|/\zeta^{2A})$ and satisfies $a_v(OPT') \le
C$, for every $l_v \in L$.  Observe that the proof of
Thm.~\ref{thm:indep} actually ensures that $OPT'$ is
$\zeta$-separated.

Let $Z = OPT' \setminus X$.  Partition $Z$ into $Z_1$ and $Z_2$, where
links in $Z_1$ failed the requirement of $\zeta/2$-separability from
$X$, while those in $Z_2$ passed the separability requirement but failed
the affectance test. We proceed to bound $|Z_1|$ and $|Z_2|$ in terms of $|X|$.

First, observe that for each link $l_v$ in $X$, at most one link in
$Z_1$ can fail to be $\zeta/2$-separated from $l_v$, as otherwise
$Z_1$ would not be $\zeta$-separated.  That implies that $|Z_1| \le
|X|$.

Now, let $l_w$ be a link in $Z_2$ and let $X_w = \{l_v \in X : l_v
\prec l_w\}$ be the links in $X$ that precede $l_w$ in the decay
order.  Let $l_u$ be a link in $X_w$.
Then, $f_{uu} \le f_{ww}$, $d_{uu} \le d_{ww}$ and $c_u \le c_w$,
since $l_u \prec l_w$.
Using the triangular inequality, the fact that $d_{uu} \le d_{ww}$,
and that $l_w$ is $\zeta/2$-separated from $X_w$, we get that
\[ d_{uw} \le d_{uu} + d_{wu} + d_{ww} \le d_{wu} + 2 d_{ww} 
   \le (1 + 4/\zeta) d_{wu}\ . \]
Thus, $f_{uw} \le (1 + 4/\zeta)^\zeta f_{wu} \le e^4 \cdot f_{wu}$.
Hence, since $f_{uu} \le f_{ww}$ and $c_u \le c_w$, 
\[ a_w(u) = c_u \frac{P/f_{wu}}{P/f_{uu}} 
          = c_u \frac{f_{uu}}{f_{wu}} \le e^4 c_w \frac{f_{ww}}{f_{uw}} = e^4 \cdot a_u(w)\ . \]
Thus, $a_w(X_w) \le e^4 \cdot a_{X_w}(w)$.
By definition of $Z_2$, $a_w(X_w) + a_{X_w}(w) \ge 1/2$.  
Combining the last two inequalities, we get that 
$a_X(w) \ge a_{X_w}(w) \ge 1/(2e^4+1)$.
Summing this inequality over links in $Z_2$,
\begin{equation}
\sum_{l_v \in X} \sum_{l_w \in Z_2} a_v(w) = \sum_{l_w \in Z_2} a_X(w) \ge \frac{|Z_2|}{2e^4+1}\ . 
\label{eq:sum-axw}
\end{equation}
On the other hand, by amicability,
\begin{equation}
\sum_{l_v \in X} \sum_{l_w \in Z_2} a_v(w) = \sum_{l_v \in X} a_v(Z_2) \le C \cdot |X|\ .
\label{eq:sum-amic}
\end{equation}
Combining (\ref{eq:sum-axw}) and (\ref{eq:sum-amic}),
we obtain that
\[ |Z_2| \le (2 e^4 +1)C \cdot |X|. \]
Thus, $|Z| = |Z_1| + |Z_2| \le ((2 e^4+1) C +1) \cdot |X|$,
and
\[ |OPT'| = |Z| + |X \cap OPT'| \le ((2 e^4+1) C +2) \cdot |X|
 \le (4 e^4 C +2C+4) \cdot |S| \ , \]
using (\ref{eq:s-x}). Hence, 
$|OPT| = O(\zeta^{2A})|OPT'| = O(\zeta^{2A}|S|)$,
as claimed.
\end{proof}

\section{Missing proof from Section \ref{sec:expon-phi}}


\noindent\textbf{Theorem \ref{thm:hardness}}(\cite{GHWW09}).\emph{
\prob{Capacity} of equi-decay links in bounded-growth decay spaces
is hard to approximate within $2^{\phi(1-o(1)}$-factor.
This holds even if the algorithm is allowed 
arbitrary power control against an adversary that uses uniform power.
} 

\begin{proof}
By reduction from the maximum independent set problem in graphs.
Let $\alpha$ be arbitrary value satisfying $\alpha \ge 1$, 
denoting the maximum path loss term and let $\alpha' = \alpha-1$.
Assume for simplicity that $N=0$ and $\beta=1$.
Let $d_2(\cdot)$ refers to the standard Euclidean distance.

Given graph $G(V, E)$, form a set $L$ of links
with link $l_i=(s_i,r_i)$ for each vertex
$v_i\in V$ located in the plane.
The senders are located on the vertical line segment $[(0,0),(0,n)]$
and the receivers on the segment $[(n,0),(n,n)]$:
$s_i$ at point $(0,i)$ and $r_i$ at point $(n,i)$.

Decays between points on the same line (both senders or both receivers)
are set to their distance to the power of $\alpha'$.
For decays between points on different lines, 
we use two fixed decays: $n^{\alpha'}$ and $n^{\alpha'+1}$.

Formally, for links $l_i$ and $l_j$, let
 \[ f_{ij} = f(s_i,r_j) = \begin{cases}
      d_2(s_i,r_j)^{\alpha'} = n^{\alpha'} & \mbox{if } i=j \\
      n^{\alpha'} - \delta & \mbox{if } v_iv_j \in E \\
      n^{\alpha'+1} & \mbox{if } v_iv_j \not\in E\ ,
   \end{cases} \]
where $0 < \delta < 1/2$.
Also, let $f(s_i, s_j) = f(r_i, r_j) = d_2(s_i, s_j)^{\alpha'} = |i-j|^{\alpha'}$.

With uniform power $P$,
we have that for each $i \ne j$,
\[ a_i(j) = \frac{P/f_{ij}}{P/f_{j}} = \frac{n^{\alpha'}}{f_{ij}}
  \begin{cases}
    \,\,  > 1 & \mbox{if } v_iv_j \in E \\
    \,\, \le 1/n & \mbox{if } v_iv_j \not\in E.
  \end{cases} \]
Hence, a set $S \subset L$ of links is feasible iff $V_S = \{v_i \in V
: l_i \in S\}$ is an independent set.

For the case of power control, consider a pair of links $l_v, l_w$
and let $\calP$ be any power assignment on the links.
If $(v,w) \in E$, then 
$f_{vw} \cdot f_{wv} = (n^{\alpha'}-\delta)^2$, which implies that
\[ a_v^\calP(w) \cdot a_w^\calP(v) 
    \ge \beta^2 \frac{f_{vv} \cdot f_{ww}}{f_{vw} \cdot f_{wv}}
    = \beta^2 \frac{n^{2\alpha'}}{(n^{\alpha'}-\delta)^2} > \beta^2 = 1\ . \]
So, at least one of $a^\calP_v(w)$ and $a^\calP_w(v)$ must be
greater than one, implying that no power assignment allows
$l_v$ and $l_w$ to be simultaneously feasible.
Hence, any feasible set $S$ must correspond to an independent set in $G$,
and we know that any independent set in $G$ can be made feasible in
$L$ using uniform power. Solutions to {\capacity} on $L$ are therefore
in one-one correspondence with solutions to \prob{Max Independent Set}
on $G$, preserving solution size.

Regarding $\varphi$, observe that $f(s_i,s_j) = f(r_i,r_j) \ge 1$.
Then, we can verify that for any triplet $a,b,c$ of points used in $L$,
\[ f_{ac} \le 2 n \max(f_{ab},f_{bc}) \ . \]
Thus, $\varphi = O(n)$.
Hence, if \prob{Capacity} is approximable within $f(\varphi)$ factor,
then \prob{Max Independent Set} is approximable within $O(f(n))$ factor.
In particular, the $\Omega(n^{1-o(1)})$-computational hardness of \prob{Max Independent
Set} \cite{khot2006better} implies equivalent $\Omega(\varphi^{1-o(1)})$-hardness for \prob{Capacity}.

Finally, we examine the bounded-growth properties of the space. 
A $t$-ball with $t < n^{\alpha'}-\delta$ contains either only senders or
only receivers, and such sets can be covered by two balls of half the radius.
However, any subset of nodes can be covered with four balls of radius
at least $(n^{\alpha'}-\delta)/2$, two on each line. Thus, the decay
space is doubling (with $A \le \lg 4 = 2$).
As for independence, all nodes on a line are closer to each other than
they are to any node on the other line. Thus, an independent set with
respect to a point $x$ contains at most two points from the same line
as $x$ and at most one point from the other line, for an independence
dimension of 3.
\end{proof}

\end{document}